\def\boxit#1{\vbox{\hrule\hbox{\vrule\kern3pt
        \vbox{\kern3pt#1\kern3pt}\kern3pt\vrule}\hrule}}
\def\reals{ { {\rm  I \kern-0.15em R }  } }
\def\complex{ {\,{{\rm C} \kern-0.50em \raise0.20ex {  |}}\, }}
\def\sbf{{\bf s}}
\def\Rbf{{\bf R}}
\def\Sbf{{\bf S}}
\def\Kc{{\cal K}}
\def\Tc{{\cal T}}
\def\defeq{{\stackrel{\Delta}{=}}}
\def\Rxx{\Rbf_{\ssstyle X\kern-.1em X}}
\let\ssstyle=\scriptscriptstyle
\def\ie{{\it i.e.,\ \/}}
\def\Kout{\setbox1=\hbox{\Huge\bf K}\hbox to
1.05\wd1{\hspace{.05\wd1}% [arxiv_v2: inline-PS \special stripped, 292 chars]}}
\def\Sout{\setbox1=\hbox{\Huge\bf S}\hbox to 1.05\wd1{\hspace{.05\wd1}% [arxiv_v2: inline-PS \special stripped, 292 chars]}}

\def\ie{{\it i.e.,\ \/}}

\def\defeq{{\stackrel{\Delta}{=}}}

\def\scalefig#1{\epsfxsize #1\textwidth}
\def\nn{{\nonumber}}
\newcommand{\mbbE}{\mathbb{E}}
\newtheorem{lemma}{Lemma}
\newtheorem{theorem}{Theorem}

\newtheorem{corollary}{Corollary}

\begin{document}
\title{\bf \LARGE On Myopic Sensing for Multi-Channel Opportunistic Access:\\[0.2em]
Structure, Optimality, and Performance\thanks{Manuscript received November 30, 2007; revised June 1, 2008 and
June 26, 2008; accepted July 9, 2008. Part of this work
was presented at CogNet, June 2007 and ICASSP, March 2008. This work was supported by the
Army Research Laboratory CTA on Communication and Networks under
Grant DAAD19-01-2-0011 and by the National Science Foundation under
Grants CNS-0627090, ECS-0622200, and CNS-0347621.}}
\author{Qing Zhao, Bhaskar Krishnamachari, Keqin Liu\thanks{Q. Zhao and K. Liu are with the
Department of Electrical and Computer Engineering, University of
California, Davis, CA 95616. Emails:
\{qzhao,kqliu\}@ucdavis.edu. B. Krishnamachari is with the Ming Hsieh Department of Electrical Engineering,
University of Southern California, Los Angeles, CA 90089. Email: bkrishna@usc.edu.}}

%\markboth{To appear in {\em IEEE Transactions on Wireless Communications}.}{Zhao, Krishnamachari, and Liu}

\maketitle%
\thispagestyle{empty}

\vspace{-4em}

\begin{abstract}

We consider a multi-channel opportunistic communication system where the states of these channels evolve as
independent and statistically identical Markov chains (the Gilbert-Elliot channel model). A user chooses
one channel to sense and access in each slot and collects a reward determined by the state of the chosen
channel. The problem is to design a sensing policy for channel selection to maximize the average reward,
which can be formulated as a multi-arm restless bandit process.
In this paper, we study the structure, optimality, and performance of the myopic sensing policy.
We show that the myopic sensing policy has a simple robust structure that reduces channel selection to a
round-robin procedure and obviates the need for knowing the channel transition probabilities. The optimality
of this simple policy is established for the two-channel case and conjectured for the general case based
on numerical results. The performance of the myopic sensing policy is analyzed, which, based on the
optimality of myopic sensing, characterizes the maximum throughput of a multi-channel opportunistic communication
system and its scaling behavior with respect to the number of channels. These results apply to cognitive radio networks,
opportunistic transmission in fading environments, downlink scheduling in
centralized networks, and resource-constrained jamming and anti-jamming.

\vspace{0.5em}
\noindent{\bf Index Terms:} Opportunistic access, cognitive radio, multi-channel MAC, multi-arm restless bandit process,
myopic policy.

\end{abstract}

\section{Introduction}

\subsection{Multi-Channel Opportunistic Access}

The fundamental idea of opportunistic access is to adapt the transmission
parameters (such as data rate and transmission power) according to the state
of the communication environment including, for example, fading conditions, interference
level, and buffer state. Since the seminal work by Knopp and Humblet in 1995 \cite{Knopp&Humblet:95ICC},
the concept of opportunistic access has found applications beyond transmission and scheduling over fading channels.
An emerging application is cognitive radio for opportunistic spectrum access, where secondary users search in the spectrum for
idle channels temporarily unused by primary users \cite{Zhao&Sadler:07SPM}. Another application is resource-constrained jamming
and anti-jamming, where a jammer seeks channels occupied by users or a user tries to avoid jammers.

We consider a general opportunistic communication system where a user has access to $N$ parallel channels
and chooses one channel to sense and access in each slot, aiming to maximize its expected long-term reward (\ie throughput).
This user can be a base station, and each channel is associated with
a downlink receiver. In this case, channel selection is equivalent to receiver selection, and the general problem considered here
also applies to downlink scheduling in a centralized network.

These $N$ channels are modelled as independent and stochastically identical Gilbert-Elliot channels
\cite{Gilbert:60}, which has been commonly used to abstract physical channels with memory
(see, for example, \cite{Zorzi&etal:98COM,Johnston&Krishnamurthy:06TWC}). As illustrated in Fig.~\ref{fig:MC},
the state of a channel --- good or bad ---
indicates the desirability of accessing this channel and determines the resulting reward.
For example, for the application of cognitive radio networks, the good state represents
an unused channel by primary users while the bad state an occupied channel\footnote{When the primary network employs
load balancing across channels, the occupancy process of all channels can be considered stochastically identical.}.
The transitions between these two states follow a Markov chain with transition probabilities $\{p_{ij}\}_{i,j=0,1}$.

\begin{figure}[htb]
\centerline{
\begin{psfrags}
\scalefig{0.9}
\psfrag{A}[c]{ $0$}
\psfrag{B}[c]{ $1$}
\psfrag{A1}[c]{ (bad)}
\psfrag{B1}[c]{ (good)}
\psfrag{a}[c]{$p_{01}$}
\psfrag{b}[l]{ $p_{11}$}
\psfrag{a1}[r]{$p_{00}$}
\psfrag{b1}[c]{ $p_{10}$}
\scalefig{0.4}\epsfbox{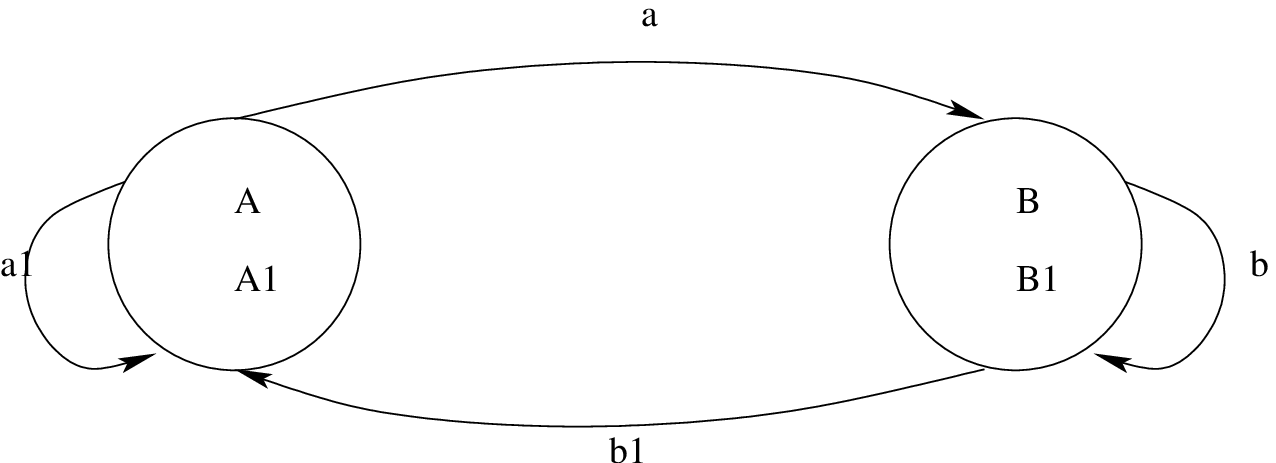}
\end{psfrags}}
\caption{The Gilbert-Elliot channel model.}
\label{fig:MC}
\end{figure}

A sensing policy that governs the channel selection in each slot is crucial
to the efficiency of multi-channel opportunistic access.
The design of the optimal sensing policy can
be formulated as a partially observable Markov decision process
(POMDP) for generally correlated channels, or a
restless multi-armed bandit process for independent channels. Unfortunately, obtaining the optimal
policy for a general POMDP or restless bandit process is often intractable
due to the exponential computation complexity.

A common approach of trading performance for tractable solutions is to consider myopic policies. A myopic policy aims
solely at maximizing the immediate reward, ignoring the impact of the current action on the future reward. Obtaining
a myopic policy is thus a static optimization problem instead of a sequential decision-making problem. As a consequence,
the complexity is significantly reduced, often at the price of considerable performance loss.

In this paper, we show that for designing sensing strategies for multi-channel opportunistic access, low complexity does not necessarily
imply suboptimal performance. The myopic sensing policy with a simple and robust structure achieves the optimal
performance under the i.i.d. Gilbert-Elliot channel model.

\subsection{Contribution}

Under the i.i.d. Gilbert-Elliot channel model,
we establish the structure and optimality of the myopic sensing policy and
analyze its performance.

\subsubsection{Structure of Myopic Sensing}

The first contribution of this paper is the establishment
of a simple and robust structure
of the myopic sensing policy. Besides significant implications in the practical
implementation, this result serves as the key to the optimality
proof and the performance analysis.

We show that the basic structure of the myopic policy
is a round-robin scheme based on a circular ordering of the channels. For the case of
$p_{11}\ge p_{01}$, the circular order is constant and determined by the initial information (if any)
on the state of each channel. The myopic action is to stay in the same channel when
it is good (state $1$) and switch to the next channel in the circular order when
it is bad. In the case of $p_{11}< p_{01}$, the circular order is reversed in every slot
with the initial order determined by the initial information on channel states.
The myopic policy stays in the same channel
when it is bad; otherwise, it switches to the next channel in the current circular
order\footnote{It is easy to show that $p_{11} > p_{01}$ corresponds to the case where
the channel states in two consecutive slots are positively correlated, \ie for any distribution
of $S(t)$, we have $\mbbE[(S(t)-\mbbE[S(t)])(S(t+1)-\mbbE[S(t+1)])]>0$, where $S(t)$ is
the state of the Gilbert-Elliot channel in slot $t$. Similar, $p_{11} < p_{01}$ corresponds to the case where
$S(t)$ and $S(t+1)$ are negatively correlated, and $p_{11}=p_{01}$ the case where
$S(t)$ and $S(t+1)$ are independent.}.

The significance of this result in terms of the practical implementations
of myopic sensing is twofold. First, it demonstrates the simplicity of
myopic sensing: channel selection is reduced to a simple round-robin procedure.
The myopic sensing policy requires no computation and little memory.
Second, it shows that myopic sensing is robust to model mismatch. Specifically,
the myopic sensing policy has a semi-universal structure; it can be implemented without
knowing the channel transition probabilities. The only required information
about the channel model is
the order of $p_{11}$ and $p_{01}$. As a result, the myopic sensing policy
automatically tracks variations in the channel model provided that the order of
$p_{11}$ and $p_{01}$ remains unchanged. Note that when $p_{11}=p_{01}$, channel
states become independent in time; all channel selections lead to the same
performance. We thus expect that myopic sensing is robust to estimation errors
in the order of $p_{11}$ and $p_{01}$, which usually occur when
$p_{11}\approx p_{01}$. This has been confirmed by simulation results~\cite{Liu&Zhao:07TR}.

\subsubsection{Optimality of Myopic Sensing}

Surprisingly, the myopic sensing policy with such a simple and
robust structure is, in fact, optimal as established in this paper
for $N=2$. Based on numerical results, we conjecture that the
optimality of the myopic policy can be generalized to $N>2$. The
optimality along with the simple and robust structure makes the
myopic sensing policy particularly appealing.

In a recent work \cite{Javidi&etal:08ICC}, based on the structure of the myopic policy,
the optimality result has been extended to $N>2$ under the condition of
$p_{11}\ge p_{01}$. While numerical results indicate that for a wide range of $p_{11}$ and $p_{01}$, the
myopic policy is also optimal for $N>2$ with $p_{11}< p_{01}$, pathological cases where optimality fails have been found
when $p_{01}-p_{11}$ is close to $1$. Nevertheless, the performance loss of the myopic policy in these cases
is minimal and tends to diminish with the horizon length. Establishing necessary and/or sufficient conditions
(potentially in the form of bounding $p_{01}-p_{11}$)
under which the myopic policy is optimal for $p_{11}< p_{01}$ appears to be
challenging. It is our hope that results and approaches presented in this paper, in particular, the simple
structure of the myopic policy, may
stimulate fresh ideas for completing the picture on the optimality of the myopic policy.

\subsubsection{Performance of Myopic Sensing}

The optimality of the myopic sensing policy motivates the performance analysis, as its performance
defines the throughput limit of a multi-channel opportunistic communication
system under the i.i.d. Gilbert-Elliot channel model. We are particularly
interested in the relationship between the maximum throughput and the number of channels.

Closed-form expressions for the performance of POMDP and restless bandit policies are rare. For this problem at hand,
the simple structure of the myopic policy again renders an exception.
Specifically, based on the structure of the myopic policy,
we show that its performance
is determined by the stationary distributions of a higher-order
countable-state Markov chain.
For $N=2$, we have a first-order Markov chain whose stationary distribution can be obtained in closed-form, leading
to exact characterizations of the throughput. For $N>2$, we construct first-order Markov
processes that stochastically dominate or are dominated by this higher-order Markov chain.
The stationary distributions of the former, again obtained in closed-forms, lead to lower and
upper bounds that monotonically tighten as the number $N$
of channels increases.

These analytical characterizations allow us to study
the rate at which the maximum throughput of an opportunistic system increases with $N$,
and to obtain the limiting performance as $N$ approaches to infinity.
Our result demonstrates that the maximum throughput of
a multi-channel opportunistic system with single-channel sensing saturates at geometric rate as the number of
channels increases. This result suggests to system designers the importance of having radios capable of sensing multiple
channels in order to fully exploit the communication opportunities offered by
a large number of channels.

%%% This is a long equation for Section II
\begin{figure*}[t!]
\normalsize
\begin{equation}
\omega_i(t+1)=\left\{\begin{array}{ll}
p_{11}, &  a(t)=i, S_{a(t)}(t)=1\\
p_{01}, &  a(t)=i, S_{a(t)}(t)=0\\
\omega_i(t)p_{11} + (1-\omega_i(t))p_{01}, & a(t)\neq i\\
\end{array}
\right..
\label{eq:omega}
\end{equation}
\hrulefill
\end{figure*}
%%% End of the long equation

\subsection{Related Work}

The structure, optimality, and performance analysis of myopic sensing in the context of opportunistic
access may bear significance in the general context of restless multi-armed
bandit processes. While an index policy (Gittins index \cite{gittins}) is known to be optimal for
the classical bandit problems, the structure of the optimal policy for a general restless bandit process
remains unknown, and the problem is shown to be PSPACE-hard \cite{tsitsiklis}. Whittle
proposed a Gittins-like heuristic index policy for restless bandit
problems~\cite{whittle}, which is asymptotically optimal
in certain limiting regime \cite{weber}. Beyond this asymptotic
result, relatively little is known about the structure of the
optimal policies for a general restless bandit process. The existing literature mainly focuses on
approximation algorithms and heuristic policies~\cite{Bertsimas00,Guha07}. The optimality of the myopic policy
shown in this paper suggests non-asymptotic conditions under which an index policy
can be optimal for restless bandit processes.

The results presented in this paper apply to cognitive radio networks, which has received
increasing attention recently. In this context,
the design of sensing policies for tracking the rapidly varying spectrum opportunities
has been addressed in \cite{Zhao&etal:07JSAC,Chen&etal:08IT} under a general
Markvian model of potentially correlated channels, where a POMDP framework has been
developed.

This paper is also related to channel probing and transmission strategies in multichannel
wireless systems (see \cite{Sabharwal&etal:07ToN,Guha&etal:06CISS,Chang&Liu:07MobiCom,Agarwal&Honig:07CrownCom}
and references therein). In contrast to the Markovian model considered in this paper, these existing results
adopt a memoryless channel model.

\section{Problem Formulation}
\label{sec:formulation}

We consider the scenario where a user is trying to access $N$ independent and stochastically identical channels
using a slotted transmission structure.
The state $S_i(t)$ of channel $i$ in slot $t$ is given by a two-state Markov
chain shown in Fig.~\ref{fig:MC}.
At the beginning of each slot, the user selects one of the $N$
channels to sense. If the channel is sensed to be good
(state $1$), the user transmits and collects one unit of
reward.  Otherwise, the user does not transmit (or transmits at a
lower rate), collects no reward, and waits until the next slot to
make another choice. The objective is to maximize the average reward
(throughput) over a horizon of $T$ slots by choosing judiciously a
sensing policy that governs channel selection in each slot.

Due to limited sensing, the full system state $[S_1(t),\cdots,S_N(t)]\in \{0,1\}^N$ in slot $t$
is not observable.  The user, however, can infer the state from its decision and
observation history. It has been
shown that a sufficient statistic for optimal decision making
is given by the conditional probability that each channel
is in state $1$ given all past decisions and observations \cite{Smallwood&Sondik:71OR}.
Referred to as the belief vector, this sufficient statistic is denoted by
$\Omega(t) \,\defeq\, [\omega_1(t),\cdots,\omega_N(t)]$, where $\omega_i(t)$
is the conditional probability that $S_i(t)=1$.
Given the sensing action $a(t)$ and the observation $S_{a(t)}(t)$ in slot $t$, the
belief vector for slot~$t+1$ can be obtained via Bayes Rule as given in~\eqref{eq:omega}.

A sensing policy $\pi$ specifies a sequence of functions $\pi =
[\pi_1, \pi_2, \cdots, \pi_T]$, where $\pi_t$ is the decision rule at time $t$ that maps a belief vector
$\Omega(t)$ to a sensing action $a(t)\in\{1,\cdots,N\}$ for slot
$t$. Multi-channel opportunistic access can thus be formulated as
the following stochastic control problem.
\begin{equation}
\pi^*=\arg\max_\pi \mbbE_{\pi}\left[\sum_{t=1}^T R_{\pi_t(\Omega(t))}(t)|\Omega(1)\right],
\label{eq:pi*}
\end{equation}
where $\pi_t(\Omega(t))$ is the channel selected and $R_{\pi_t(\Omega(t))}(t)=S_{\pi_t(\Omega(t))}(t)$ the reward so obtained
when the belief is $\Omega(t)$, and $\Omega(1)$ is the initial belief vector.
If no information about the initial system state is available, each entry of $\Omega(1)$ can be set to the
stationary distribution $\omega_o$ of the underlying Markov chain:
\begin{equation}
\omega_o=\frac{p_{01}}{p_{01}+p_{10}}.
\label{eq:wo}
\end{equation}

This problem falls into
the general model of POMDP. It can also be considered as a restless
multi-armed bandit problem by treating the belief value of each channel
as the state of each arm of a bandit.
Note that for a given sensing policy $\pi$, the belief vectors $\{\Omega(t)\}_{t=1}^T$ form a Markov
process with an uncountable state space. The expectation in \eqref{eq:pi*} is with respect to this
Markov process which determines the reward process. The difficulty in obtaining the optimal policy
$\pi^*$ and characterizing its performance largely results from the complexity of analyzing a Markov process
with uncountable state space.

\section{Optimal Policy vs. Myopic Policy}

\subsection{Value Function and Optimal Policy}

Let $V_t(\Omega(t))$ be the value function, which represents the
maximum expected total reward that can be obtained starting from
slot $t$ given the current belief vector $\Omega(t)$. Given that the
user takes action $a$ and observes $S_a(t)$ in slot $t$, the reward that can be
accumulated starting from slot $t$ consists of two parts: the expected
immediate reward $\mbbE[R_a(t)]=\mbbE[S_a(t)]=\omega_a(t)$ and the maximum expected
future reward $V_{t+1}(\Tc(\Omega(t)|a, S_a(t)))$, where $\Tc(\Omega(t)|a,
S_a(t))$ denotes the updated belief vector for slot $t+1$ as given in
\eqref{eq:omega}. Averaging over all possible observations $S_a(t)$ and
maximizing over all actions $a$, we arrive at the following
optimality equations.
\begin{eqnarray}
V_{T}(\Omega(T)) &=& \max_{a=1, \cdots, N} \omega_a(T) \nonumber\\
V_{t}(\Omega(t)) &=& \max_{a=1, \cdots, N} \left\{ \omega_a(t) +
\omega_a(t) V_{t+1} \left( \Tc\left( \Omega(t)| a, 1\right)\right) \right.\nn\\
& & +\left.
(1-\omega_a(t))  V_{t+1} \left( \Tc\left( \Omega(t) | a, 0\right) \right)\right\}.
\label{DP-finite-t}
\end{eqnarray}

In theory, the optimal policy $\pi^*$ and its performance
$V_1(\Omega(1))$ can be obtained by solving the above dynamic
program. Unfortunately, this approach is computationally prohibitive
due to the impact of the current action on the future reward and the
uncountable space of the belief vector $\Omega(t)$. Even if
approximate numerical solutions are feasible, they do not provide
insights for system design or analytical characterizations of the
optimal performance $V_1(\Omega(1))$.

\subsection{Myopic Policy}

A myopic policy ignores the impact of the current action on the future reward,
focusing solely on maximizing the expected immediate reward $\mbbE[R_a(t)]$. Myopic policies
are thus stationary: the mapping from
belief vectors to actions does not change with time $t$.  The myopic action $\hat{a}(t)$
and the value function $\hat{V}_t(\Omega(t))$ of the myopic policy for a given belief vector $\Omega(t)$ are given by
\begin{eqnarray}
\hat{a}(t)&=&\arg\max_{a=1,\cdots,N} \omega_a(t),\label{eq:a*}\\
\hat{V}_t(\Omega(t))&=&\omega_{\hat{a}(t)}(t)+\omega_{\hat{a}(t)}(t) \hat{V}_{t+1} \left( \Tc\left( \Omega(t)| \hat{a}(t), 1\right)\right)\nn\\
& &   +
(1-\omega_{\hat{a}(t)}(t))  \hat{V}_{t+1} \left( \Tc\left( \Omega(t) | \hat{a}(t), 0\right) \right).\nn
\end{eqnarray}
In general, obtaining the myopic action in each slot requires
the recursive update of the belief vector $\Omega(t)$ as given in
\eqref{eq:omega}, which requires the knowledge of the transition
probabilities $\{p_{ij}\}$. In the next section, we show that the myopic
policy has a simple semi-universal structure that does not need the update of the
belief vector or the knowledge of the transition
probabilities.

\section{Structure of Myopic Sensing}
\label{sec:structure}

In this section, we establish the simple and robust structure of the myopic policy, which
lays out the foundation for the optimality proof and performance
analysis in subsequent sections.

\subsection{Structure}

The basic element in the structure of the myopic policy is a circular ordering $\Kc$ of the channels.
For a circular order, the starting point is irrelevant: a circular order
$\Kc=(n_1,n_2,\cdots,n_N)$ is equivalent to
$(n_i,n_{i+1},\cdots, n_N, n_1, n_2,\cdots, n_{i-1})$ for any $1\le i\le N$.
An example of a circular order is given in Fig.~\ref{fig:structure1}, where all $N$ channels are
placed on a circle in the clockwise direction.

We now introduce the following notations. For a circular order $\Kc$, let $-\Kc$ denote its reverse
circular order, \ie  for $\Kc=(n_1,n_2,\cdots,n_N)$,
we have $-\Kc=(n_N,n_{N-1},\cdots,n_1)$
(see Fig.~\ref{fig:structure2} for an illustration where the lower circle on the right shows
the reverse circular order of that given by the circle on the left).

For a channel~$i$, let $i_{\Kc}^+$ denote the next channel in the circular order $\Kc$. For example,
for $\Kc=(1,2,\cdots,N)$, we have $i_{\Kc}^+=i+1$ for $1\le i<N$ and $N_{\Kc}^+=1$.

With these notations, we present the structure of the myopic policy in Theorem~\ref{thm:structure}.

\begin{theorem} {\it Structure of Myopic Sensing:}\\
Let $\Omega(1)=[\omega_1(1),\cdots,\omega_N(1)]$ denote the initial belief vector.
The circular channel order $\Kc(1)$ in slot~$1$ is
determined by a descending order of $\Omega(1)$ (\ie $\Kc(1)=(n_1,n_2,\cdots,n_N)$ implies that
$\omega_{n_1}(1)\ge\omega_{n_2}(1)\ge\cdots\ge\omega_{n_N}(1)$).
Let $\hat{a}(1)=\arg\max_{i=1,\cdots,N} \omega_i(1)$. The myopic action $\hat{a}(t)$ in slot $t$ ($t>1$) is given as follows.
\begin{itemize}
\item {\it Case 1: $p_{11}\ge p_{01}$}
\end{itemize}
\begin{equation}
\hat{a}(t)=\left\{\begin{array}{ll}
\hat{a}(t-1), & \mbox{if } S_{\hat{a}(t-1)}(t-1)=1\\
\hat{a}(t-1)_{\Kc(t)}^+, & \mbox{if } S_{\hat{a}(t-1)}(t-1)=0 \\
\end{array}
\right.,
\label{eq:structure1}
\end{equation}
where $\Kc(t)=\Kc(1)$.
\begin{itemize}
\item {\it Case 2: $p_{11}< p_{01}$}
\end{itemize}
\begin{equation}
\hat{a}(t)=\left\{\begin{array}{ll}
\hat{a}(t-1) & \mbox{if } S_{\hat{a}(t-1)}(t-1)=0\\
\hat{a}(t-1)_{\Kc(t)}^+ & \mbox{if } S_{\hat{a}(t-1)}(t-1)=1 \\
\end{array}
\right.,
\label{eq:structure2}
\end{equation}
where $\Kc(t)=\Kc(1)$ when $t$ is odd and $\Kc(t)=-\Kc(1)$ when $t$ is even.
\label{thm:structure}
\end{theorem}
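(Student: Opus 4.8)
The plan is to prove Theorem~\ref{thm:structure} by induction on the slot index $t$, carrying as the inductive invariant a statement that simultaneously pins down the circular order $\Kc(t)$ and the myopic action $\hat{a}(t)$: in every slot the belief values are monotone along the circle, \ie writing the relevant order with $\hat{a}(t)$ as head, $\Kc(t)=(m_1,m_2,\ldots,m_N)$ with $m_1=\hat{a}(t)$, one has $\omega_{m_1}(t)\ge\omega_{m_2}(t)\ge\cdots\ge\omega_{m_N}(t)$, where $\Kc(t)=\Kc(1)$ in Case~1 and $\Kc(t)$ alternates between $\Kc(1)$ and $-\Kc(1)$ in Case~2. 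The base case $t=1$ is immediate from the definitions in the statement, since $\hat{a}(1)$ maximizes $\Omega(1)$ and $\Kc(1)$ is a descending order of $\Omega(1)$.

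The argument hinges on three elementary properties of the update~\eqref{eq:omega}. Write $\tau(\omega)\defeq\omega p_{11}+(1-\omega)p_{01}$ for the passive update applied to every channel that is not sensed. Then (i) $\tau$ is affine, hence monotone --- non-decreasing if $p_{11}\ge p_{01}$ and strictly decreasing if $p_{11}<p_{01}$; (ii) $\tau(0)=p_{01}$ and $\tau(1)=p_{11}$, so $\tau$ maps $[0,1]$ into the closed interval with endpoints $p_{01},p_{11}$; and (iii) the sensed channel is reset to exactly one of those endpoints --- to $p_{11}$ if observed good, to $p_{01}$ if observed bad. Consequently, after one step the sensed channel's new belief is the extreme value among all $N$ updated beliefs, and $\tau$ either preserves or exactly reverses the order of the other $N-1$ beliefs. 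In Case~1 this closes the step at once: a ``good'' observation keeps $m_1$ at the head, so $\hat{a}(t+1)=\hat{a}(t)$ and $\Kc(t+1)=\Kc(1)$; a ``bad'' observation sends $m_1$ (now reset to the smallest value $p_{01}$) to the tail, so the new non-increasing order is $(m_2,\ldots,m_N,m_1)$, \ie $\Kc(1)$ advanced by one, giving $\hat{a}(t+1)=\hat{a}(t)_{\Kc(1)}^+$.

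Case~2 uses the same computation but with $\tau$ reversing the order of the unsensed channels, so the circular order must flip each slot. After a ``bad'' observation the stayed channel is reset to $p_{01}$, which by~(ii) is now the largest attainable value, so it remains at the head while the remaining channels reappear in reversed order --- the new head-rooted order is exactly $-\Kc(t)$. After a ``good'' observation the channel drops to $p_{11}$, the smallest value, and the new head is the former tail of $\Kc(t)$, which is precisely the successor of $\hat{a}(t)$ in $-\Kc(t)$. Since $-\Kc(t)=\Kc(t+1)$ under the claimed alternation, both subcases reproduce~\eqref{eq:structure2}, and the parity bookkeeping ($t=1$ odd with $\Kc(1)$, flipping at each step) closes the induction; the step from an even slot to an odd slot is verbatim the same with $\Kc(1)$ and $-\Kc(1)$ interchanged.

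I expect the only real obstacle to be the bookkeeping in Case~2: one must be scrupulous that the slot-by-slot reversal of the circular order is exactly compensated by the order-reversing action of $\tau$, so that ``move to the next channel'' is read off $\Kc(t+1)$ rather than $\Kc(t)$. A minor further point is that ties among belief values make the myopic action non-unique; since every inequality above is non-strict, the invariant --- and hence the stated round-robin rule --- holds under any tie-breaking consistent with $\Kc(1)$, so this creates no difficulty.
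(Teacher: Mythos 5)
Your proposal is correct and follows essentially the same route as the paper's Appendix~A: an induction maintaining the invariant that beliefs are non-increasing along the circular order rooted at the myopic action, using the monotonicity of the passive update $\tau(\omega)=\omega p_{11}+(1-\omega)p_{01}$, the fact that all updated beliefs lie between $p_{01}$ and $p_{11}$, and that the sensed channel is reset to one of these extremes. Your handling of the Case~2 order reversal and of ties matches the paper's argument, so there is nothing substantive to add.
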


%\vspace{-1em}

\begin{proof}
See Appendix A.
\end{proof}

Theorem~\ref{thm:structure} shows that the basic structure of the myopic policy
is a round-robin scheme based on a circular ordering of the channels. For
$p_{11}\ge p_{01}$, the circular order is constant: $\Kc(t)=\Kc(1)$ in every slot~$t$, where $\Kc(1)$ is determined by a descending
order of the initial belief values. The myopic action is to stay in the same channel when
it is good (state $1$) and switch to the next channel in the circular order when
it is bad (see Fig.~\ref{fig:structure1} for an illustration).

\begin{figure}[h]
\centerline{
\begin{psfrags}
\psfrag{s}[c]{\small $S_1=0$}
\psfrag{t}[c]{\small $\Kc(t)=\Kc(1)$}
\scalefig{0.3}\epsfbox{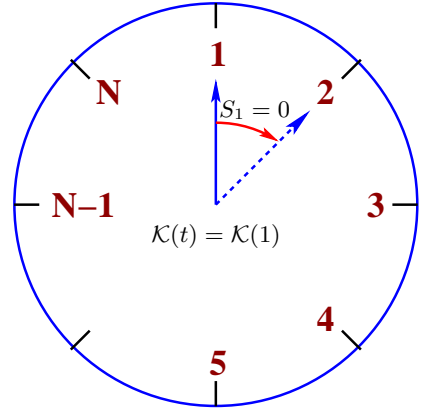}
\end{psfrags}}
\caption{The structure of the myopic policy for $p_{11}\ge p_{01}$: the circular order of the channels is constant and determined
by the initial belief $\Omega(1)$ ($\omega_1(1)\ge \omega_2(1)\ge \cdots \ge \omega_N(1)$ is assumed in this example, thus
$\hat{a}(1)=1$);
the myopic policy switches to the next channel when the current one is in the bad state.}
\label{fig:structure1}
\end{figure}

In the case of $p_{11}< p_{01}$, the circular order is reversed in every slot, \ie
$\Kc(t)=\Kc(1)$ when $t$ is odd and $\Kc(t)=-\Kc(1)$ when $t$ is even, where
the initial order $\Kc(1)$ is determined by the initial belief values. The myopic policy stays in the same channel
when it is bad; otherwise, it switches to the next channel in the {\it current} circular order $\Kc(t)$, which is
either $\Kc(1)$ or $-\Kc(1)$ depending on whether the current time $t$ is odd or even.
An illustrated is given in Fig.~\ref{fig:structure2}.

\begin{figure}[h]
\centerline{
\begin{psfrags}
\psfrag{a}[c]{ $t=1$}
\psfrag{a1}[c]{\small $\Kc(1)$}
\psfrag{b}[c]{ $t=L$}
\psfrag{b1}[c]{\small $\Kc(L)=\Kc(1)$}
\psfrag{c}[c]{ $t=L$}
\psfrag{c1}[c]{\small $~~\Kc(L)=-\Kc(1)$}
\psfrag{o}[c]{ $L$ odd}
\psfrag{e}[c]{ $L$ even}
\psfrag{s}[c]{\footnotesize $S_1=1$}
\scalefig{0.5}\epsfbox{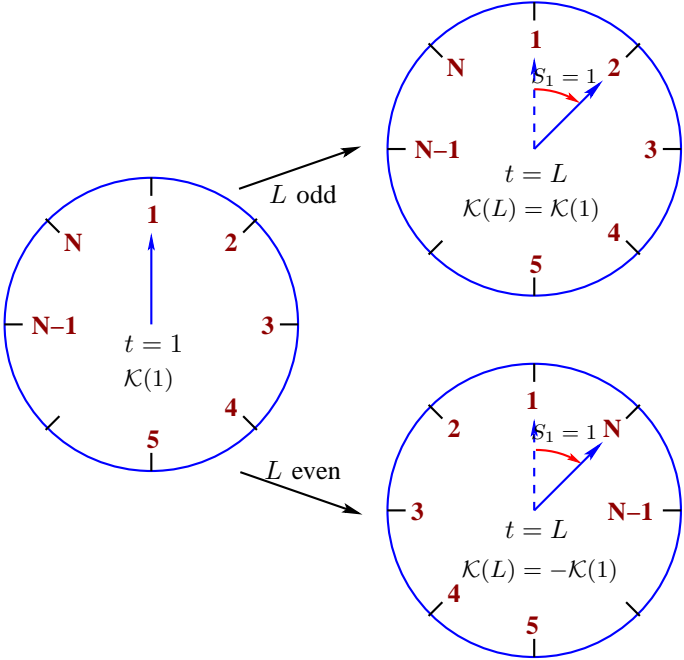}
\end{psfrags}
}
\caption{The structure of the myopic policy for $p_{11}< p_{01}$: in the first slot ($t=1$), the circular order $\Kc(1)$ is determined
by the initial belief $\Omega(1)$ ($\omega_1(1)\ge \omega_2(1)\ge \cdots \ge \omega_N(1)$ is assumed in this example, thus
$\hat{a}(1)=1$). Suppose that channel~$1$ is in the bad state in slots $1,\cdots,L-2$ and in the good state in slot $L-1$. The circular
order at $t=L$ is $\Kc(1)$ when $L$ is odd and $-\Kc(1)$ when $L$ is even, and $\hat{a}(L)$ is the next channel in $\Kc(L)$,
\ie $\hat{a}(L)=2$ for $L$ odd and $\hat{a}(L)=N$ for $L$ even.}
\label{fig:structure2}
\end{figure}

An alternative way to see the channel switching structure of the myopic policy is through the last visit to each channel (once every channel
has been visited at least once). Specifically, for $p_{11}\ge p_{01}$, when
a channel switch is needed, the policy
selects the channel visited the longest time ago. For $p_{11}< p_{01}$,
when a channel switch is
needed, the policy selects, among those channels to which the last
visit occurred an even number of slots ago, the one most recently
visited. If there are no such channels, the user chooses the channel
visited the longest time ago (see Appendix~B for a proof).

%%% This is a long equation for Corollary 1 in Section IV-B.
\begin{figure*}[t!]
\normalsize
\begin{eqnarray}
p_{11}\ge p_{01}~~~~~~~~~~~~~~~~~~~~ & ~~~~~~ & ~~~~~~~~~~~~~~~~~~~~ p_{11}< p_{01} \nn\\[0.8em]
q_{\vec{i},\, \vec{j}}=\left\{\begin{array}{ll}
\prod_{k=1}^N p_{i_k,j_k} & \mbox{ if } i_1=1\\
p_{i_1,j_N}\prod_{k=2}^{N}p_{i_k,j_{k-1}} & \mbox{ if } i_1=0\\
\end{array}
\right.,
& ~~~~~~ &
q_{\vec{i},\, \vec{j}}=\left\{\begin{array}{ll}
\prod_{k=1}^N p_{i_k,j_{N-k+1}} & \mbox{ if } i_1=1\\
p_{i_1,j_1}\prod_{k=2}^{N}p_{i_k,j_{N-k+2}} & \mbox{ if } i_1=0\\[0.3em]
\end{array}
\right.,
\label{eq:qij}
\end{eqnarray}
where $\vec{i}=[i_1,i_2,\cdots,i_N]$, $\vec{j}=[j_1,j_2,\cdots,j_N]$ with entries equal to $0$ or $1$.\\
\hrule
\end{figure*}
%%% End of the long equation.

\subsection{Properties}

The simple structure of the myopic policy has significant implications in both practical
and technical aspects. Implementation-wise,
the following properties of the myopic policy follow from its structure: {\it belief-independence} and
{\it model-insensitivity}. Specifically, the myopic policy does not require the update of
the belief vectors or the knowledge
of the transition probabilities except the order of $p_{11}$ and $p_{01}$.
These properties make the myopic policy particularly
attractive in implementation. Besides its simplicity, this semi-universal structure leads
to robustness against model mismatch and variations.

A technical benefit of this simple structure is that it provides the foundation for establishing
the optimality and characterizing the performance of the myopic policy as given in
Sec.~\ref{sec:optimality}-\ref{sec:performance}, as well as the generalizations of the optimality proof
to $N>2$ given in \cite{Javidi&etal:08ICC}. The reason is that the structure allows us to work with a Markov reward
process with a finite state space instead of one with an uncountable state space (\ie belief vectors) as we
encounter in a general POMDP. Details are stated in the corollary below.

\begin{corollary}
Let $\Kc(t)=(n_1,n_2,\cdots,n_N)$ ($n_i\in\{1,2,\cdots,N\}~\forall i$) be the circular order of channels
in slot $t$, where
the starting point of the circular order is fixed to the myopic action: $n_1=\hat{a}(t)$ for all $t$. Then the resulting ordered
channel states
$\vec{\Sbf}(t)\, \defeq\, [S_{n_1}(t),S_{n_2}(t),\cdots,S_{n_N}(t)]\}$ form a $2^N$-state Markov chain with transition
probabilities $\{q_{\vec{i},\, \vec{j}}\}$ given in \eqref{eq:qij}, and
the performance of the myopic policy is determined by the Markov reward process $(\vec{\Sbf}(t),R(t))$ with
$R(t)=S_{n_1}(t)$.
\label{thm:MRP}
\end{corollary}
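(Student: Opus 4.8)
The plan is to leverage Theorem~\ref{thm:structure} directly: since the myopic policy reduces channel selection to a round-robin procedure on a circular order $\Kc(t)$ that is either fixed ($\Kc(1)$ for all $t$ when $p_{11}\ge p_{01}$) or alternates between $\Kc(1)$ and $-\Kc(1)$ (when $p_{11}<p_{01}$), I can track the state of \emph{all} channels through the lens of this circular order. The key observation is that once we agree to always list the channels starting from the current myopic action $\hat a(t)=n_1$ and proceeding around the circle, the ordered state vector $\vec\Sbf(t)=[S_{n_1}(t),\dots,S_{n_N}(t)]$ contains all the information needed, and its evolution is Markovian because (i) the underlying per-channel states evolve as independent two-state Markov chains, and (ii) the rule that determines the \emph{relabeling} from slot $t$ to slot $t+1$ depends only on $\vec\Sbf(t)$ itself (specifically on whether $S_{n_1}(t)=0$ or $1$, which dictates whether we stay or advance, hence how the circle is re-indexed).

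The main steps are as follows. First I would fix the convention $n_1=\hat a(t)$ and describe precisely how the ordered labeling $\Kc(t)=(n_1,\dots,n_N)$ maps to $\Kc(t+1)=(n_1',\dots,n_N')$ in each of the four scenarios (Case~1 with $S_{n_1}(t)=1$ or $0$; Case~2 with $S_{n_1}(t)=0$ or $1$), using \eqref{eq:structure1}--\eqref{eq:structure2} and the fact that $-\Kc$ reverses the order. Concretely: in Case~1 with $S_{n_1}(t)=1$ the labeling is unchanged; with $S_{n_1}(t)=0$ the new first channel is $n_2$, so indices shift cyclically by one, $n_k'=n_{k+1}$ (indices mod $N$). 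In Case~2, because $\Kc$ flips parity each slot, one composes the cyclic shift with the reversal, which yields exactly the index permutations encoded in the right-hand column of \eqref{eq:qij}. Second, for a target ordered state $\vec j=[j_1,\dots,j_N]$ in slot $t+1$, I would translate $j_k$ back into a constraint on the slot-$t$ channels via the permutation just derived, and then write $\Pr[\vec\Sbf(t+1)=\vec j\mid \vec\Sbf(t)=\vec i]$ as a product over channels of single-step transition probabilities $p_{i\cdot,j\cdot}$ — valid precisely because the $N$ channels are independent. Matching the bookkeeping of which slot-$t$ index feeds which slot-$(t+1)$ index, conditioned on $i_1=1$ versus $i_1=0$ (the two cases corresponding to "stay" versus "advance"), reproduces \eqref{eq:qij} line by line. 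Finally, since the reward in slot $t$ is $R(t)=S_{\hat a(t)}(t)=S_{n_1}(t)$, which is a deterministic function of $\vec\Sbf(t)$, the pair $(\vec\Sbf(t),R(t))$ is a Markov reward process and the myopic throughput $\sum_t \mbbE[R(t)]$ is determined by it.

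The main obstacle I anticipate is purely notational rather than conceptual: getting the index arithmetic exactly right in Case~2, where the reversal of the circular order interacts with the cyclic shift by one. One has to be careful that "advance to the next channel in the current order $\Kc(t)$" followed by "the new order is $-\Kc(1)$" composes correctly, and that the channel that was just sensed (now in state $1$) lands in the right slot of the new labeling; an off-by-one or a mistaken direction of reversal would give a transition kernel that looks plausible but differs from \eqref{eq:qij}. I would handle this by working out a small example ($N=3$ or $N=4$) explicitly to pin down the permutation, then verifying the general formula. Everything else — the Markov property, the product-form transitions from independence, the reward being a function of the state — follows immediately once the relabeling map is correctly identified, so no substantive difficulty remains there.
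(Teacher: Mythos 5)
Your proposal is correct and follows essentially the same route as the paper's own proof: it uses the round-robin structure of Theorem~\ref{thm:structure} to identify how $S_{n_1}(t)$ alone determines the relabeling of the ordered state vector from slot $t$ to slot $t+1$ (unchanged or first-to-last for $p_{11}\ge p_{01}$; partial or full reversal for $p_{11}<p_{01}$), and then invokes channel independence to get the product-form kernel in \eqref{eq:qij} and the fact that $R(t)=S_{n_1}(t)$ is a function of the state. The only difference is your added caution about the Case~2 index bookkeeping, which is a reasonable verification step but not a departure from the paper's argument.
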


\begin{proof} The proof follows directly from Theorem~\ref{thm:structure} by noticing that $S_{n_1}(t)$ determines
the channel ordering in $\vec{\Sbf}(t+1)$ and each channel evolves as independent Markov chains. Specifically, for $p_{11}\ge p_{01}$,
if $S_{n_1}(t)=1$, the channel ordering in $\vec{\Sbf}(t+1)$ is the same as that in $\vec{\Sbf}(t)$; if $S_{n_1}(t)=0$,
the first channel (channel $n_1$) in $\vec{\Sbf}(t)$ is moved to the last one in $\vec{\Sbf}(t+1)$ with the ordering of the rest $N-1$ channels
unchanged. For $p_{11}< p_{01}$, if $S_{n_1}(t)=0$, the first channel in $\vec{\Sbf}(t)$ remains the first in $\vec{\Sbf}(t+1)$
while the ordering of the rest channels is reversed; if $S_{n_1}(t)=1$, the ordering of all $N$ channels are reversed.
The transition probabilities given in \eqref{eq:qij} thus follow.
\end{proof}

\section{Optimality of Myopic Sensing}
\label{sec:optimality}

In this section, we establish the optimality of the myopic policy for $N=2$.
Our proof hinges on the structure of the myopic policy given in Theorem~\ref{thm:structure}
and Corollary~\ref{thm:MRP}.

\begin{theorem} {\it Optimality of Myopic Sensing:}\\
For $N=2$, the myopic sensing policy is optimal, \ie $\hat{V}_t(\Omega)=V_t(\Omega)$ for all $t$ and $\Omega$.
\label{thm:optimal}
\end{theorem}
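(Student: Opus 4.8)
The plan is to prove the theorem by a joint backward induction on the slot index $t$, from $t=T$ down to $t=1$, carrying two hypotheses: (A) the myopic action is optimal in slot $t$, so that $\hat{V}_t=V_t$ for every belief pair; and (B) an auxiliary ``affine-slice'' property — the maps $x\mapsto V_t(p_{11},x)$ and $x\mapsto V_t(p_{01},x)$ are affine on the interval $I$ between $p_{01}$ and $p_{11}$, which by \eqref{eq:omega} is exactly the range of the one-step belief map $\tau(\omega)\defeq\omega p_{11}+(1-\omega)p_{01}$. I will use freely that $V_t$ is symmetric in its two arguments, which is immediate since the channels are statistically interchangeable. By Theorem~\ref{thm:structure} and Corollary~\ref{thm:MRP}, once (A) is known for slots $>t$ the continuation value in slot $t$ is an explicit function of the updated belief pair, and this explicitness is what drives the induction. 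The base case $t=T$ is clear: $V_T=\max(\omega_1,\omega_2)$, (A) is trivial, and for $x\in I$ we have $V_T(p_{11},x)=p_{11}$ and $V_T(p_{01},x)=x$, both affine.

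For the inductive step, assume (A) and (B) for all slots $>t$. Let $W_t(a,b)\defeq a+aV_{t+1}(p_{11},\tau(b))+(1-a)V_{t+1}(p_{01},\tau(b))$ denote the value of sensing the channel with belief $a$ while the other has belief $b$; by \eqref{DP-finite-t} with $N=2$, $V_t(\omega_1,\omega_2)=\max\{W_t(\omega_1,\omega_2),\,W_t(\omega_2,\omega_1)\}$. The crux of the step is the ``equal-continuation'' identity $W_t(a,b)-a=W_t(b,a)-b$ for all $a,b\in[0,1]$ — the expected reward-to-go after sensing does not depend on which channel is chosen. Substituting the affine forms of $V_{t+1}(p_{11},\cdot)$ and $V_{t+1}(p_{01},\cdot)$ on $I$ (valid since $\tau(a),\tau(b)\in I$) and $\tau(b)=p_{01}+(p_{11}-p_{01})b$, the difference $[W_t(a,b)-a]-[W_t(b,a)-b]$ collapses to $(a-b)\bigl[V_{t+1}(p_{11},p_{01})-V_{t+1}(p_{01},p_{11})\bigr]$, which vanishes by symmetry of $V_{t+1}$. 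Hence $W_t(\omega_1,\omega_2)\ge W_t(\omega_2,\omega_1)$ whenever $\omega_1\ge\omega_2$, so the maximizer in \eqref{DP-finite-t} is to sense the higher-belief channel — the myopic action — and together with (A) at slot $t+1$ this gives $\hat{V}_t=V_t$, establishing (A) at slot $t$.

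It remains to propagate (B). Consider $p_{11}\ge p_{01}$, so $I=[p_{01},p_{11}]$. For $x\in I$ we have $p_{11}\ge x$, so by (A) at $t$ and Theorem~\ref{thm:structure} the myopic action from $(p_{11},x)$ senses the $p_{11}$-channel: $V_t(p_{11},x)=p_{11}+p_{11}V_{t+1}(p_{11},\tau(x))+(1-p_{11})V_{t+1}(p_{01},\tau(x))$; since $\tau(x)\in I$, (B) at $t+1$ makes each term affine in $\tau(x)$, hence in $x$, so $V_t(p_{11},\cdot)$ is affine on $I$. For $x\in I$ we also have $x\ge p_{01}$, so the myopic action from $(p_{01},x)$ senses the $x$-channel, giving $V_t(p_{01},x)=x+xV_{t+1}(p_{11},\tau(p_{01}))+(1-x)V_{t+1}(p_{01},\tau(p_{01}))$, which is affine in $x$ because $\tau(p_{01})$ and the values of $V_{t+1}$ at fixed belief points are constants. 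The case $p_{11}<p_{01}$ is identical after swapping the roles of $p_{11}$ and $p_{01}$ as ``larger'' and ``smaller'': $\tau$ is now order-reversing with the same range $I=[p_{11},p_{01}]$, the myopic switching rule exchanges ``good'' and ``bad'' (Theorem~\ref{thm:structure}, Case~2), but $W_t$ has the same algebraic form, the equal-continuation identity still reduces to $V_{t+1}(p_{11},p_{01})=V_{t+1}(p_{01},p_{11})$, and the two affine-slice checks go through verbatim with $p_{11}$ and $p_{01}$ interchanged. (The degenerate case $p_{11}=p_{01}$ is trivial: $I$ is a single point and $\tau$ is constant; ties in the $\arg\max$ are broken arbitrarily with no effect.)

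The main obstacle is identifying the correct invariant rather than the ensuing bookkeeping. An induction carrying only monotonicity and a Lipschitz-type control on $V_{t+1}$ loses a constant factor and, as a short estimate shows, closes only for $p_{11}-p_{01}$ bounded away from $1$ (roughly $p_{11}-p_{01}\le\tfrac12$ for the crudest such bound); what makes the argument work for \emph{all} parameter values is the fact — mildly surprising, since POMDP value functions are generically only piecewise-linear convex — that $V_t$ is exactly affine along the two corner slices $\{p_{11}\}\times I$ and $\{p_{01}\}\times I$, equivalently that the post-sensing reward-to-go is choice-independent. Verifying that this affine-slice property reproduces itself under the dynamic-programming recursion, using the explicit continuation supplied by Theorem~\ref{thm:structure}, is the delicate point, and it must be checked separately (though in parallel fashion) for the positively- and negatively-correlated cases.
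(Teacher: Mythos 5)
Your proof is correct, and it reaches the decisive fact by a genuinely different route than Appendix~C. Both arguments ultimately rest on the same ``equal-continuation'' phenomenon --- the expected reward-to-go after sensing does not depend on which channel is sensed, so the two action values differ by exactly $\omega_1-\omega_2$ --- but the paper derives it by conditioning on the realized joint channel state: a one-step-deviation lemma (Lemma~\ref{lemma:pomdp}) reduces optimality to $\hat{V}_t(\Omega)\ge\hat{V}_t(\Omega;a)$, and Corollary~\ref{thm:MRP} makes the myopic continuation reward a function of the previous action and the true states only (belief-independent), so a one-step expansion as in \eqref{eq:AppC1}--\eqref{eq:AppC2} plus channel-swap symmetry gives $\hat{V}_{t+1}(1|[1,0])=\hat{V}_{t+1}(1|[0,1])$ and hence $\hat{V}_t(\Omega;1)-\hat{V}_t(\Omega;2)=\omega_1-\omega_2$. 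You instead stay entirely in belief space and carry the stronger invariant that $V_t$ is exactly affine on the reachable slices $\{p_{11}\}\times I$ and $\{p_{01}\}\times I$; with $\tau(b)=p_{01}+(p_{11}-p_{01})b$ the action difference indeed collapses to $(a-b)\bigl[V_{t+1}(p_{11},p_{01})-V_{t+1}(p_{01},p_{11})\bigr]=0$ (I verified the algebra), and the affine property propagates because, once the myopic action is known optimal at slot $t$, each slice value is an affine image of slice values at $t+1$. What each approach buys: the paper's conditional-state argument needs no information about the shape of the value function, uses only the finite-state Markov reward process supplied by the structure result, and is the form that was extended to $N>2$ in \cite{Javidi&etal:08ICC}; your argument uses the structure theorem only to identify the higher-belief channel (essentially the definition of the myopic action) and isolates an exact affineness of $V_t$ on the reachable slices --- a stronger statement of independent interest, since POMDP value functions are generically only piecewise linear --- but its bookkeeping is tied to $N=2$, where the unobserved channel's belief lies in the one-dimensional interval $I$. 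Two cosmetic points: your displayed base case of (B) is the $p_{11}\ge p_{01}$ version (the swap you invoke later does cover $p_{11}<p_{01}$, where $V_T(p_{11},x)=x$ and $V_T(p_{01},x)=p_{01}$), and ties are harmless precisely because the equal-continuation identity makes both actions' values coincide there.
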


\begin{proof}
see Appendix~C.
\end{proof}

Based on extensive numerical results, we conjecture that the
optimality of the myopic sensing policy can be generalized to
$N > 2$. A recent work~\cite{Javidi&etal:08ICC} has made partial
progress towards proving this conjecture, by showing that the
optimality holds for $N
> 2$ under the condition of $p_{11}\ge p_{01}$.
Furthermore, it is shown in~\cite{Javidi&etal:08ICC} that if the myopic
policy is optimal under the sum-reward criterion over a finite horizon, it is
also optimal for other criteria such as discounted and averaged rewards over a
finite or infinite horizon. In
the case of infinite-horizon discounted reward, it is determined
that so long as the discount factor is less than 0.5, the myopic
policy is optimal for all $N$.

\section{Performance of Myopic Sensing}
\label{sec:performance}

In this section, we analyze the performance of the myopic policy. With the optimality results,
the throughput achieved
by the myopic policy defines the performance limit of a multi-channel opportunistic communications
system. In particular, we are
interested in the relationship between this maximum throughput
and the number $N$ of channels.

\subsection{Uniqueness of Steady-State Performance and Its Numerical Evaluation}

We first establish the existence and uniqueness of the system steady states under the myopic policy.
The steady-state throughput of the myopic policy is given by
\begin{equation}
U(\Omega(1))\,\defeq\, \lim_{T\rightarrow\infty} \frac{\hat{V}_{1:T}(\Omega(1))}{T},
\label{eq:Udef}
\end{equation}
where $\hat{V}_{1:T}(\Omega(1))$ is the expected total reward obtained in $T$ slots under the myopic policy
when the initial belief is $\Omega(1)$. From Corollary~\ref{thm:MRP}, $U(\Omega(1))$ is determined by the Markov reward
process $\{\vec{\Sbf}(t),R(t)\}$. It is easy to see that the $2^N$-state Markov chain $\{\vec{\Sbf}(t)\}$ is
irreducible and aperiodic, thus has a limiting distribution. As a consequence, the limit in \eqref{eq:Udef} exists,
and the steady-state throughput $U$ is independent of the initial belief value $\Omega(1)$.

Corollary~\ref{thm:MRP} also provides a numerical approach to evaluating $U$ by calculating the limiting (stationary)
distribution of $\{\vec{\Sbf}(t)\}$ whose transition probabilities are given in~\eqref{eq:qij}.
Specifically, the throughput $U$ is given by the summation of the
limiting probabilities of those $2^{N-1}$ states with first entry $S^{(1)}=1$.
This numerical approach, however, does not provide an analytical characterization
of the throughput $U$ in terms of the number $N$ of channels and the transition probabilities $\{p_{i,j}\}$.
In the next section, we obtain analytical expressions of $U$ and its scaling behavior
with respect to $N$ based on a stochastic dominance argument.

\subsection{Analytical Characterization of Throughput}

\subsubsection{The Structure of Transmission Period}
\label{sec:TP}

From the structure of the myopic policy we can see that the key to
the throughput is how often the user switches channels, or
equivalently, how long the user stays in the same channel. When
$p_{11}\ge p_{01}$, the event of channel switching is equivalent to a
slot {\em without} reward. The opposite holds when $p_{11}< p_{01}$:
a channel switching corresponds to a slot {\em with} reward.

We thus introduce the concept of transmission period (TP), which is the time the user stays in
the same channel (see Fig.~\ref{fig:TP}). Let $L_k$ denote the length of the
$k$th TP. We then have a discrete-time random process $\{L_k\}_{k=1}^\infty$
with a state space of positive integers.

\begin{figure}[h]
\centerline{
\begin{psfrags}
\psfrag{c}[c]{ channel switching}
\psfrag{l1}[c]{ $L_k=3$}
\psfrag{l2}[c]{ $L_{k+1}=6$}
\psfrag{t}[c]{ $t$}
\scalefig{0.5}\epsfbox{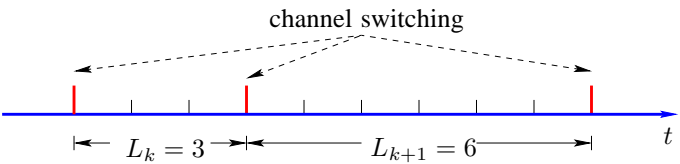}
\end{psfrags}}
\caption{The transmission period structure.}
\label{fig:TP}
\end{figure}

Based on the structure of the myopic policy, we have
\begin{equation}
U=\left\{\begin{array}{ll}
\lim_{K\rightarrow\infty}\frac{\Sigma_{k=1}^{K}(L_k-1)}{\Sigma_{k=1}^{K}L_k}, & p_{11}\ge p_{01}\\
\lim_{K\rightarrow\infty}\frac{\Sigma_{k=1}^{K}1}{\Sigma_{k=1}^{K}L_k}, & p_{11}< p_{01}.\\
\end{array}
\right..
\label{eq:U0}
\end{equation}
Let $\bar{L}=\lim_{K\rightarrow\infty}\frac{\sum_{k=1}^K L_k}{K}$ denote the average length of a TP.
The above equation leads to
\begin{equation}
U=\left\{\begin{array}{ll}
1-1/\bar{L}, & p_{11}\ge p_{01}\\[0.5em]
1/\bar{L}, & p_{11}< p_{01}\\
\end{array}
\right..
\label{eq:U}
\end{equation}
Throughput analysis is thus reduced to analyzing the average TP length $\bar{L}$.
For $N=2$, a closed-form expression
of $\bar{L}$ can be obtained, which leads to a closed-form expression
of the throughput $U$ (see Sec.~\ref{sec:n2}). For $N>2$, lower and upper bounds on $U$ are obtained
(see Sec.~\ref{sec:n3}).

\vspace{1em}

\subsubsection{Throughput for $N=2$}
\label{sec:n2}

From the structure of the myopic policy, $\{L_k\}_{k=1}^\infty$ form a first-order Markov chain for $N=2$.
Specifically, the distribution of $L_k$ is determined by the belief value of the chosen channel in the first
slot of the $k$-th TP. The latter equals to $p_{01}^{(L_{k-1}+1)}$ for $p_{11}\ge p_{01}$
and $p_{11}^{(L_{k-1}+1)}$ for $p_{11}< p_{01}$, where $p_{01}^{(j)}$ is the $j$-step transition probability.
The transition probabilities of $\{L_k\}_{k=1}^\infty$ are thus given as follows.
\begin{itemize}
\item For $p_{11}\ge p_{01}$,
\begin{equation}
r_{ij}=\left\{\begin{array}{ll}
1-p^{(i+1)}_{01}, & i\geq 1,~j=1\\
p^{(i+1)}_{01}p_{11}^{j-2}p_{10}, & i\geq 1,~j\geq 2.\\
\end{array}
\right..
\label{eq:rij1}
\end{equation}
\item For $p_{11}< p_{01}$,
\begin{equation}
r_{ij}=\left\{\begin{array}{ll}
p^{(i+1)}_{11}, & i\geq 1,~j=1\\
p^{(i+1)}_{10}p_{00}^{j-2}p_{01}, & i\geq 1,~j\geq 2\\
\end{array}
\right..
\label{eq:rij2}
\end{equation}
\end{itemize}

As shown in Appendix~D, the limiting distribution $\{\lambda_l\}_{l=1}^\infty$ of this countable-state Markov chain
can be obtained in closed-form, which leads to $\bar{L}=\sum_{l=1}^\infty l\lambda_l$ and then the throughput $U$ from~\eqref{eq:U}.

\begin{theorem}
For $N=2$, the throughput $U$ is given by
\begin{equation}
U=\left\{\begin{array}{ll}
1-\frac{1-p_{11}}{1+\bar{\omega}-p_{11}}, & p_{11}\ge p_{01}\\[0.5em]
\frac{p_{01}}{1-\bar{\omega}'+p_{01}}, & p_{11}< p_{01}\\
\end{array}
\right.,
\label{eq:N2}
\end{equation}
where $\bar{\omega}$ and $\bar{\omega}'$ are the expected probability that the channel
the user switches to is in state $1$ when $p_{11}\ge p_{01}$ and $p_{11}< p_{01}$, respectively. They
are given in~\eqref{eq:baromega1} and \eqref{eq:baromega2}.
\label{thm:N2}
\end{theorem}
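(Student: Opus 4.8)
The plan is to reduce the throughput $U$ to a single scalar, the mean stationary length $\bar{L}$ of a transmission period, via \eqref{eq:U}, and then to obtain $\bar{L}$ in closed form from the stationary distribution of the chain $\{L_k\}$. Since \eqref{eq:U} already gives $U=1-1/\bar{L}$ for $p_{11}\ge p_{01}$ and $U=1/\bar{L}$ for $p_{11}<p_{01}$, it suffices to (i) argue that for $N=2$ the process $\{L_k\}_{k\ge1}$ is a time-homogeneous, irreducible, positive-recurrent first-order Markov chain with kernel \eqref{eq:rij1} (resp. \eqref{eq:rij2}), and (ii) compute its unique stationary law $\{\lambda_l\}_{l\ge1}$ and the mean $\bar{L}=\sum_{l\ge1}l\,\lambda_l$. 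Step (i) follows from Theorem~\ref{thm:structure}: for $N=2$, when a switch occurs the channel switched to is exactly the channel last observed $L_{k-1}+1$ slots earlier, so its current belief is the deterministic function $p^{(L_{k-1}+1)}_{01}$ (for $p_{11}\ge p_{01}$) or $p^{(L_{k-1}+1)}_{11}$ (for $p_{11}<p_{01}$) of $L_{k-1}$ alone; this belief determines the conditional law of $L_k$, which yields precisely \eqref{eq:rij1}--\eqref{eq:rij2}, and irreducibility/aperiodicity is immediate since every state leads to state~$1$, which carries a self-loop.

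For step (ii) I would solve the balance equations $\lambda_j=\sum_{i\ge1}\lambda_i r_{ij}$ by exploiting the product structure of the kernel in the index $j$. In \eqref{eq:rij1}, $r_{ij}=p^{(i+1)}_{01}\cdot p_{10}p_{11}^{j-2}$ for $j\ge2$, so setting $C\defeq\sum_{i\ge1}\lambda_i p^{(i+1)}_{01}$ gives $\lambda_j=C\,p_{10}\,p_{11}^{j-2}$ for all $j\ge2$, while the $j=1$ equation together with $\sum_i\lambda_i=1$ forces $\lambda_1=1-C$; one then checks that $\sum_{j\ge1}\lambda_j=1$ holds automatically (using $p_{10}=1-p_{11}$), so this is a bona fide probability distribution, hence the unique stationary law, and the chain is positive recurrent. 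The point is that $C$ is nothing but $\bar{\omega}$: by definition $\bar{\omega}$ is the stationary expectation of the probability that the switched-to channel is good, i.e. $\sum_i\lambda_i p^{(i+1)}_{01}=C$. The case $p_{11}<p_{01}$ is entirely symmetric using \eqref{eq:rij2}: one obtains $\lambda_1=\bar{\omega}'$ and $\lambda_j=(1-\bar{\omega}')\,p_{01}\,p_{00}^{j-2}$ for $j\ge2$, with $\bar{\omega}'=\sum_i\lambda_i p^{(i+1)}_{11}$. (A single linear equation then pins down $\bar{\omega}$ and $\bar{\omega}'$ numerically; the theorem only needs $U$ expressed through them.)

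It then remains to evaluate $\bar{L}=\sum_{l\ge1}l\,\lambda_l$, a weighted geometric series. For $p_{11}\ge p_{01}$, using $\sum_{j\ge2}j\,p_{11}^{j-2}=(2-p_{11})/(1-p_{11})^2$ and $p_{10}=1-p_{11}$, the sum collapses to $\bar{L}=1+\bar{\omega}/(1-p_{11})$, and substituting into \eqref{eq:U} gives $U=1-(1-p_{11})/(1-p_{11}+\bar{\omega})$; for $p_{11}<p_{01}$, the analogous computation with $p_{00}=1-p_{01}$ yields $\bar{L}=1+(1-\bar{\omega}')/p_{01}$ and $U=p_{01}/(1-\bar{\omega}'+p_{01})$, which is exactly \eqref{eq:N2}. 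I expect the only genuinely delicate part to be step (i): cleanly justifying that $\{L_k\}$ is first-order and time-homogeneous for $N=2$ (this is special to two channels, and also tacitly relies on the unique steady state established in Section~\ref{sec:performance}), verifying the self-consistency of identifying the stationary functional $C$ with $\bar{\omega}$, and handling the degenerate parameter value $p_{11}=1$, where $\bar{L}=\infty$ and \eqref{eq:N2} holds in the limiting sense $U=1$; the geometric-series algebra itself is routine.
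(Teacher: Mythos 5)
Your proposal is correct and follows essentially the same route as the paper's Appendix D: treat the transmission-period lengths $\{L_k\}$ as a first-order Markov chain with kernel \eqref{eq:rij1} (resp.\ \eqref{eq:rij2}), solve the balance equations to get the geometric-tail stationary law $\lambda_1=1-\bar{\omega}$, $\lambda_j=\bar{\omega}p_{10}p_{11}^{j-2}$ (and its analogue for $p_{11}<p_{01}$), compute $\bar{L}$, and substitute into \eqref{eq:U}. The only difference is that the paper pushes the fixed-point equation for $\bar{\omega}$ and $\bar{\omega}'$ through the geometric-series summation to obtain the explicit closed forms \eqref{eq:baromega1}--\eqref{eq:baromega2} (which the theorem statement references), whereas you leave that last, routine step implicit ("pinned down numerically"); it is in fact solvable in closed form exactly as the theorem asserts.
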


\begin{proof}
See Appendix~D.
\end{proof}

%%% This is a long equation for Theorem 3 in Section VI-B.2.
\begin{figure*}[t!]
\begin{eqnarray}
\bar{\omega}&=&\frac{p_{01}^{(2)}}{1+p_{01}^{(2)}-A},~~~\mbox{where
}~
p_{01}^{(2)}=p_{00}p_{01}+p_{01}p_{11},~ A=\frac{p_{01}}{1+p_{01}-p_{11}}(1-\frac{(p_{11}-p_{01})^3(1-p_{11})}{1-(p_{11})^2+p_{11}p_{01}}),\label{eq:baromega1}\\
\bar{\omega}'&=&\frac{B}{1-p_{11}^{(2)}+B},~~~\mbox{where}~
p_{11}^{(2)}=p_{10}p_{01}+p_{11}p_{11},~
B=\frac{p_{01}}{1+p_{01}-p_{11}}(1+\frac{(p_{11}-p_{01})^3(1-p_{11})}{1-(1-p_{01})(p_{11}-p_{01})}).\label{eq:baromega2}
\end{eqnarray}
\hrulefill
\end{figure*}
%%% End of the long equation.

\vspace{1em}

\subsubsection{Throughput for $N>2$}
\label{sec:n3}

For $N>2$, $\{L_k\}_{k=1}^\infty$ is a random process with higher-order memory. In particular,
for $p_{11}\ge p_{01}$, it is an $(N-1)$-th order Markov chain. As a consequence, closed-form expressions
of $\bar{L}$ are difficult to obtain.
Our objective is to develop lower and upper bounds on $U$, which would allow
us to study the scaling behavior of $U$ with respect to $N$.

The approach is to construct first-order Markov chains that
stochastically dominate or are dominated by $\{L_k\}_{k=1}^\infty$.
The stationary distributions of these first-order Markov chains,
which can be obtained in closed-form, lead to lower and upper
bounds on $U$ according to \eqref{eq:U}. Specifically, for
$p_{11}\ge p_{01}$, a lower bound on $U$ is obtained by constructing
a first-order Markov chain whose stationary distribution is
stochastically dominated by the stationary distribution of
$\{L_k\}_{k=1}^\infty$. An upper bound on $U$ is given by a
first-order Markov chain whose stationary distribution
stochastically dominates the stationary distribution of
$\{L_k\}_{k=1}^\infty$. Similarly, bounds on $U$ can be obtained for
$p_{11}< p_{01}$.

\begin{theorem}
For $N>2$, we have the following lower and upper bounds on the throughput $U$.
\begin{itemize}
\item {\it Case 1:} $p_{11}\ge p_{01}$
\end{itemize}
\begin{equation}
\frac{C}{C+(1-D+C)(1-p_{11})}\leq U \leq\frac{\omega_o}{1-p_{11}+\omega_o},
\label{eq:N3a}
\end{equation}
where $\omega_o$ is given by \eqref{eq:wo} and
\begin{eqnarray}
C&=&\omega_o(1-(p_{11}-p_{01})^{N}),\nn\\
D&=&\omega_o(1-\frac{(p_{11}-p_{01})^{N+1}(1-p_{11})}{1-p_{11}^2+p_{11}p_{01}}).\nn
\end{eqnarray}

\vspace{1em}

\begin{itemize}
\item {\it Case 2:} $p_{11}< p_{01}$
\end{itemize}

\begin{equation}
1-\frac{p_{10}^{(2)}}{E-p_{01} H}\leq U \leq 1-
\frac{p_{10}^{(2)}}{E-p_{01} G}, \label{eq:N3b}
\end{equation}
where
\begin{eqnarray}
p_{10}^{(2)}&=&p_{10}p_{00}+p_{11}p_{10},\nn\\
E&=&p_{10}^{(2)}(1+p_{01})+p_{01}(1-F),\nn\\
F&=&(1-p_{01})(1-\omega_o)\nn\\
& &(\frac{1}{2-p_{01}}-\frac{p_{01}(p_{11}-p_{01})^4}{1-(p_{11}-p_{01})^2(1-p_{01})^2}),\nn\\
G&=&(1-\omega_o)(\frac{1}{2-p_{01}}-\frac{p_{01}(p_{11}-p_{01})^6}{1-(p_{11}-p_{01})^2(1-p_{01})^2}),\nn\\
H&=&(1-\omega_o)(\frac{1}{2-p_{01}}-\frac{p_{01}(p_{11}-p_{01})^{2N-1}}{1-(p_{11}-p_{01})^2(1-p_{01})^2}).\nn
\end{eqnarray}

\vspace{1em}

\begin{itemize}
\item {\it Monotonicity:} in both cases, the upper bound is independent of $N$ while the lower bound
monotonically approaches to the upper bound as $N$ increases; for $p_{11}\ge p_{01}$, the lower bound converges
to the upper bound as $N\rightarrow\infty$.
\end{itemize}
\label{thm:N3}
\end{theorem}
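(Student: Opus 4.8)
The plan is to prove Theorem~\ref{thm:N3} in two parts: first establishing the bounds on $U$ via stochastic dominance of appropriately constructed first-order Markov chains, then verifying the monotonicity and convergence claims. For the bounds, recall from Section~\ref{sec:TP} that $U$ is a monotone function of $\bar{L}=\sum_l l\lambda_l$, so it suffices to bound $\bar{L}$, which in turn follows from stochastically ordering the stationary distribution $\{\lambda_l\}$ of the TP-length process $\{L_k\}$. The key structural fact (for $p_{11}\ge p_{01}$) is that the distribution of $L_k$ given the past depends on the belief value $\omega$ of the channel the user switches to at the start of the $k$th TP, and this belief is monotone increasing in that channel's ``rest time'' — the number of slots since it was last visited. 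Since the rest time of the newly-selected channel is itself governed by how the previous $N-1$ TPs unfolded, one constructs two auxiliary first-order chains: a pessimistic one in which the switched-to channel is assigned the smallest belief consistent with the structure (equivalently, rest time exactly $N-1$, giving belief $p_{01}^{(N)}$, or in the limiting simplification the stationary belief $\omega_o$), and an optimistic one assigning the largest such belief. One then shows, by a coupling argument on the TP-length chains, that the true stationary $\{\lambda_l\}$ is sandwiched in the usual stochastic order between the stationary distributions of these two auxiliary chains, whose means can be computed in closed form from the geometric-type transition kernels in \eqref{eq:rij1}--\eqref{eq:rij2}. Plugging these closed-form means into \eqref{eq:U} yields the expressions $C,D$ (resp. $E,F,G,H$) after algebra; the correction terms involving $(p_{11}-p_{01})^N$ etc.\ come from summing the geometric series that track how the rest-time distribution concentrates.

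For Case~2 ($p_{11}<p_{01}$) the argument is parallel but with the roles of the states and the circular-order reversal of Corollary~\ref{thm:MRP} taken into account; the reversal means the relevant ``rest time parity'' enters, which is why the exponents appear as $2N-1$, $4$, $6$ rather than $N$, $N+1$ — the period-two structure of $\Kc(t)$ in \eqref{eq:structure2} doubles the effective distance a channel must wait. I would handle this by working directly with the ordered-channel-state Markov chain $\vec{\Sbf}(t)$ of Corollary~\ref{thm:MRP}, identifying the belief of the switched-to channel with the appropriate multi-step transition probability, and running the same dominance argument on $\{L_k\}$.

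For monotonicity: the upper bound in each case is visibly $N$-independent (it uses the extreme ``fresh'' belief, which does not depend on $N$), so the content is that the lower bound increases with $N$ and, for Case~1, converges to the upper bound. Both follow once one observes that as $N$ grows, the newly-visited channel has been rested longer, so its belief is closer to the $N$-independent extreme; concretely $C\to\omega_o$ and $D\to\omega_o$ as $N\to\infty$ since $(p_{11}-p_{01})^N\to0$ (using $|p_{11}-p_{01}|<1$), whence the lower bound $\frac{C}{C+(1-D+C)(1-p_{11})}\to\frac{\omega_o}{1-p_{11}+\omega_o}$. Monotonicity in $N$ (not just the limit) follows because the map $N\mapsto$ (stationary mean of the pessimistic auxiliary chain) is monotone — the pessimistic chain for $N+1$ stochastically dominates that for $N$ — which again is a one-line coupling since a longer forced rest can only increase the switched-to belief. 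For Case~2 one only claims monotone approach, not convergence, and the same coupling gives monotonicity while the limit is left open because the pessimistic and optimistic extremes need not coincide there.

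The main obstacle I anticipate is the coupling step that sandwiches the \emph{true} stationary distribution of $\{L_k\}$ between those of the two first-order auxiliary chains: one must argue that replacing the genuine (history-dependent, order-$(N-1)$) dependence of $L_k$ on the past by its extreme first-order surrogate preserves the stochastic order of the stationary laws, which requires both a monotone-likelihood-ratio property of the kernels in \eqref{eq:rij1}--\eqref{eq:rij2} in the conditioning belief and a verification that the resulting monotone Markov chains are stochastically ordered in stationarity — standard but needing care, since the state space is countably infinite and one must rule out mass escaping to infinity (finiteness of $\bar L$). Everything downstream of that — computing the stationary means of the geometric-kernel chains, simplifying to $C,D,E,F,G,H$, and checking the limits — is routine algebra.
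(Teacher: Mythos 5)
Your proposal follows essentially the same route as the paper's Appendix~E: express $U$ through the mean transmission-period length via \eqref{eq:U}, bound the belief of the switched-to channel using the monotone increase (Case~1) or parity oscillation (Case~2) of the $j$-step transition probabilities $p_{01}^{(j)}$, $p_{11}^{(j)}$, construct first-order surrogate chains for $\{L_k\}$ whose stationary laws sandwich the true one in stochastic order, and compute their stationary means in closed form as in Appendix~D; the monotonicity/convergence claims then follow from the $N$-dependence of the resulting expressions. Two remarks. First, to reproduce the exact constants ($D$ in Case~1, $F,G,H$ in Case~2) the surrogate kernels must retain the dependence on the previous TP length: the paper's pessimistic chain uses belief $p_{01}^{(N+L_{k-1}-1)}$, and in Case~2 the beliefs $p_{11}^{(L_{k-1}+1)}$, $p_{11}^{(L_{k-1}+4)}$, $p_{11}^{(L_{k-1}+2N-3)}$ according to the parity of $L_{k-1}$. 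Your parenthetical ``rest time exactly $N-1$, giving belief $p_{01}^{(N)}$'' describes a constant-belief surrogate, which only yields the weaker bound $\frac{C}{C+1-p_{11}}$ rather than \eqref{eq:N3a}; your later reference to the kernels \eqref{eq:rij1}--\eqref{eq:rij2} and the geometric corrections suggests you intend the $L_{k-1}$-dependent version, but this needs to be stated precisely. Second, the sandwich step you single out as the main obstacle --- that the stationary distribution of the true higher-order process $\{L_k\}$ is stochastically ordered relative to those of the first-order surrogates --- is exactly the step the paper itself does not prove in Appendix~E either (it is deferred to the technical report \cite{Liu&Zhao:07TR}), so identifying it as the crux, and the monotone-kernel machinery needed for it, is on target.
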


\begin{proof}
See Appendix~E.
\end{proof}

Numerical results given in~\cite{Liu&Zhao:07TR} have demonstrated the tightness of the bounds:
the relative difference between the lower and the upper bounds is within $6\%$ for a wide range of
transition probabilities~$\{p_{i,j}\}$.

The monotonicity of the difference between the upper and lower bounds with respect to $N$
shows that the performance of the multi-channel
opportunistic system improves with the number $N$ of channels, as suggested by intuition.
For $p_{11}\ge p_{01}$, the upper bound gives the limiting performance of the
opportunistic system when $N\rightarrow\infty$. In Corollary~\ref{thm:rate} below, we show
that the throughput of an opportunistic system increases to a constant at (at least) geometric rate
as $N$ increases.
This result conveys an important message regarding system design:  the throughput of
a multi-channel opportunistic system with single-channel sensing quickly saturates as the number of
channels increases; it is thus crucial to enhance radio sensing capability in order to fully exploit
the communication opportunities offered by a large number of channels.

\begin{corollary}
For $p_{11}> p_{01}$, the lower bound on throughput $U$ converges to the constant upper bound
at geometrical rate $(p_{11}-p_{01})$ as $N$ increases; for $p_{11}< p_{01}$, the lower bound on
$U$ converges to a constant at geometrical rate $(p_{01}-p_{11})^2$.
\label{thm:rate}
\end{corollary}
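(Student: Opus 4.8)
The plan is to extract the rate directly from the closed-form bounds of Theorem~\ref{thm:N3}, exploiting the fact that $N$ enters those formulas only through a single power of $p_{11}-p_{01}$ and only inside rational expressions of a very restricted (affine-over-affine) form.

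Consider first $p_{11}>p_{01}$ and put $x:=(p_{11}-p_{01})^{N}$, so that $x\downarrow 0$ at exactly geometric rate $p_{11}-p_{01}$. In \eqref{eq:N3a} the quantities $C=\omega_o(1-x)$ and $D=\omega_o\bigl(1-\tfrac{(p_{11}-p_{01})(1-p_{11})}{1-p_{11}^{2}+p_{11}p_{01}}\,x\bigr)$ are \emph{affine} in $x$ (using $(p_{11}-p_{01})^{N+1}=(p_{11}-p_{01})x$), hence the lower bound $\ell_{N}:=\frac{C}{C+(1-D+C)(1-p_{11})}$ is a ratio of two functions affine in $x$. A one-line simplification rewrites the denominator as $C(2-p_{11})+(1-D)(1-p_{11})$, which at $x=0$ equals $\omega_o+1-p_{11}$; thus $\ell_{N}\big|_{x=0}=\frac{\omega_o}{1-p_{11}+\omega_o}$, i.e.\ the lower bound meets the ($N$-independent) upper bound in the limit. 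Writing the gap as $\frac{\omega_o}{1-p_{11}+\omega_o}-\ell_{N}=\frac{\omega_o(d_{0}+d_{1})\,x}{d_{0}\,(d_{0}+d_{1}x)}$, where $d_{0}=\omega_o+1-p_{11}$ is the value of the denominator at $x=0$ and $d_{1}$ its slope, one checks $d_{0}+d_{1}=(1-p_{11})\bigl(1-\omega_o(1-\alpha)\bigr)$ with $\alpha:=\tfrac{(p_{11}-p_{01})(1-p_{11})}{1-p_{11}^{2}+p_{11}p_{01}}\in(0,1)$, so $d_{0}+d_{1}>0$. Since also $d_{0}>0$ and $d_{0}+d_{1}x\ge C(2-p_{11})\ge\omega_o(1-(p_{11}-p_{01})^{3})(2-p_{11})>0$ uniformly in $N\ge3$, the gap is $\Theta(x)=\Theta((p_{11}-p_{01})^{N})$ — the claimed geometric rate — and its monotone decrease in $N$ is already part of Theorem~\ref{thm:N3}.

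For $p_{11}<p_{01}$ the same idea applies to \eqref{eq:N3b}. Set $y:=(p_{11}-p_{01})^{2N-1}=(p_{11}-p_{01})^{-1}\bigl((p_{01}-p_{11})^{2}\bigr)^{N}$, which tends to $0$ at geometric rate $(p_{01}-p_{11})^{2}$ since $(p_{01}-p_{11})^{2}<1$. Only $H$ depends on $N$, and $H$ is affine in $y$, so the lower bound $1-\frac{p_{10}^{(2)}}{E-p_{01}H}$ is again affine-over-affine in $y$: writing $E-p_{01}H=\beta_{0}+\beta_{1}y$ with $\beta_{1}=\tfrac{p_{01}^{2}(1-\omega_o)}{1-(p_{11}-p_{01})^{2}(1-p_{01})^{2}}>0$ and $\beta_{0}$ the value at $y=0$, the lower bound converges to the constant $1-p_{10}^{(2)}/\beta_{0}$ (note $\beta_{0}\neq E-p_{01}G$, so this limit is \emph{not} the upper bound, consistent with the statement), and the difference from it equals $\frac{p_{10}^{(2)}\beta_{1}}{\beta_{0}(\beta_{0}+\beta_{1}y)}\,y$. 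Because $1-p_{10}^{(2)}/(E-p_{01}H)$ is a genuine throughput bound lying in $[0,1]$ for every $N$ by Appendix~E, we have $E-p_{01}H\ge p_{10}^{(2)}$, hence $\beta_{0}\ge p_{10}^{(2)}>0$; the coefficient is therefore a nonzero constant in the limit, giving convergence at exactly geometric rate $(p_{01}-p_{11})^{2}$.

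The only non-mechanical point, and thus the part to handle carefully, is verifying that the leading coefficients $d_{0}+d_{1}$ and $p_{10}^{(2)}\beta_{1}$ do not vanish — that the affine-over-affine cancellation does not wipe out the first-order term — and, correspondingly, that the denominators $C+(1-D+C)(1-p_{11})$ and $E-p_{01}H$ stay bounded away from $0$ uniformly in $N$ so that the one-term expansion is legitimate. Both reduce to elementary inequalities among $p_{00},p_{01},p_{10},p_{11}\in(0,1)$ together with the positivity of the stationary probabilities already established in Appendices~D and~E; no new machinery is required.
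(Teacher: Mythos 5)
Your argument is correct and is essentially the paper's own proof: Appendix~F likewise rewrites each lower bound as an explicit rational function of the single $N$-dependent term $(p_{11}-p_{01})^{N}$ (respectively $(p_{11}-p_{01})^{2N-1}$), in the form $a+b/(x^{N}+c)$ and $d+e/(x^{2N-1}+f)$ with $x=|p_{11}-p_{01}|$, and reads off the geometric rates $x$ and $x^{2}$ from the first-order behavior of the gap. Your extra verifications that the leading coefficients do not vanish and that the denominators stay bounded away from zero merely make explicit what the paper asserts through the conditions $c\neq 0$ and $f\neq 0$.
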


\begin{proof}
See Appendix~F.
\end{proof}

\section{Conclusion and Future Work}

We have considered an optimal sensing problem that is of fundamental
interest in contexts involving opportunistic communications over
multiple channels. We have
shown that for independent and identically evolving channels, the myopic sensing policy has a simple round-robin
structure, which obviates the need to know the exact channel parameters,
making it extremely easy to implement in practice. We have proved
that the myopic policy is optimal for the two-channel case. We
have also characterized in closed-form the throughput
performance of the myopic policy and the scaling behavior with respect to
the number of channels.

Future directions include sensing policies for non-identical channels
and with multi-channel sensing. In a recent work \cite{Liu&Zhao:08SDR},
the existence of Whittle's index policy and the closed-form expression of
Whittle's index have been obtained, leading to a simple, near-optimal index policy
for non-identical channels with multi-channel sensing.
Furthermore, it is shown in \cite{Liu&Zhao:08SDR} that the myopic policy
is equivalent to Whittle's index policy when channels are identical. The results
obtained in this paper on the myopic policy thus also apply to Whittle's index policy.
The structure and optimality of the myopic policy is also extended to multichannel sensing
in~\cite{Liu&Zhao:08Asilomar}.

It is also of interest to consider sensing policies for
multiple users competing for communication opportunities in multiple
channels. Recent work on extending the myopic sensing policy to multi-user
scenarios can be found in \cite{Liu&etal:08CogNet,Liu&etal:08ISSSTA}.

\section*{Appendix A: Proof of Theorem~\ref{thm:structure}}

We prove Theorem~\ref{thm:structure} by showing that the channel $\hat{a}(t)$ given by
\eqref{eq:structure1} and \eqref{eq:structure2} is indeed the channel with the largest
belief value in slot $t$. Specifically, we prove the following lemma.

\begin{lemma}
Let $\hat{a}(t)=i_1$ be the channel determined by \eqref{eq:structure1} for $p_{11}\ge p_{01}$
and by \eqref{eq:structure2} for $p_{11}< p_{01}$. Let $\Kc(t)=(i_1,i_2,\cdots,i_N)$ be the
circular order of channels in slot $t$, where we set the starting point to $\hat{a}(t)=i_1$.
We then have, for any $t\ge 1$,
\begin{equation}
\omega_{i_1}(t)\ge\omega_{i_2}(t)\ge\cdots\ge\omega_{i_N}(t),
\label{eq:A11}
\end{equation}
\label{lemma:structure}
\ie the channel given by \eqref{eq:structure1} and \eqref{eq:structure2} has the largest
belief value in every slot $t$.
\end{lemma}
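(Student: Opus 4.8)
The plan is to prove Lemma~\ref{lemma:structure} by induction on $t$, keeping track of not just the ordering \eqref{eq:A11} but also the precise form of the belief values $\omega_{i_k}(t)$ along the circular order. The base case $t=1$ is immediate since $\Kc(1)$ is defined to be a descending order of $\Omega(1)$. For the inductive step, I would maintain the following stronger invariant: after the recursion has run for a while, the belief vector consists of exactly one ``freshly sensed'' channel together with $N-1$ channels whose beliefs are obtained by applying the one-step belief map $\tau(x)\defeq xp_{11}+(1-x)p_{01}$ repeatedly to either $p_{11}$ or $p_{01}$. The key analytic fact is that $\tau$ is affine with slope $p_{11}-p_{01}$, hence monotone increasing when $p_{11}\ge p_{01}$ and monotone decreasing when $p_{11}<p_{01}$; moreover $\tau$ maps $[0,1]$ into the interval between $p_{01}$ and $p_{11}$ (more precisely, iterates of $\tau$ converge monotonically to $\omega_o$). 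So the ordering among the ``aged'' channels is governed entirely by how many times $\tau$ has been applied and whether the chain started from a good or bad observation.

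Concretely, for Case~1 ($p_{11}\ge p_{01}$): I would argue that at every slot the belief values, read around $\Kc(1)$ starting from the channel last sensed good, take the form $\omega_{i_1}(t)\in\{p_{11}\}$ (if just sensed good) and the remaining channels have beliefs $\tau^{(m_2)}(\cdot)\ge\tau^{(m_3)}(\cdot)\ge\cdots$ where the ``ages'' $m_k$ are strictly increasing as we move along the circular order away from $\hat a(t)$. Since $p_{11}$ is the largest value any belief can take (as $\tau(x)\le p_{11}$ for all $x\in[0,1]$ when $p_{11}\ge p_{01}$), and since older channels have beliefs closer to $p_{01}\le$ everything, the claimed ordering is preserved. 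When $S_{\hat a(t-1)}(t-1)=1$ the freshly-good channel stays on top; when $S_{\hat a(t-1)}(t-1)=0$ the formerly-top channel gets belief $p_{01}$, which by the age-monotonicity argument is now the \emph{smallest}, so it rotates to the back of $\Kc(1)$ and the next channel — which has the largest age among the rest, hence by decreasingness of age-to-belief is actually the second largest... here I need to be careful about whether ``oldest'' means largest or smallest belief, and that is exactly where the case split on the sign of $p_{11}-p_{01}$ enters. For Case~2 ($p_{11}<p_{01}$), the same bookkeeping works but $\tau$ is now decreasing, so a channel sensed \emph{bad} acquires belief $p_{01}$ which is now the \emph{largest}, and one sensed good acquires $p_{11}$, the smallest; this flips the roles and also causes the parity-dependent reversal of the circular order, since applying a decreasing map twice is increasing, so the induced order on aged channels alternates.

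The main obstacle I anticipate is handling the transient phase before every channel has been sensed at least once (so that some channels still sit at their initial belief $\omega_i(1)$ rather than at a $\tau$-iterate of $p_{01}$ or $p_{11}$), and showing that the initial descending order $\Kc(1)$ interleaves correctly with the newly-generated $\tau$-iterates. The clean way around this is to observe that once a channel is re-sensed its belief is reset to $p_{11}$ or $p_{01}$ regardless of history, and that for the untouched channels the initial ordering $\omega_{n_1}(1)\ge\cdots\ge\omega_{n_N}(1)$ is preserved under the common evolution $\omega_i(t+1)=\tau(\omega_i(t))$ precisely because $\tau$ is monotone (increasing in Case~1, decreasing-then-the-parity-reversal-compensates in Case~2). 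A careful induction that simultaneously tracks (i) which channels have been sensed, (ii) the ages of the sensed ones, and (iii) that the initial order survives on the unsensed ones should close the argument; the two sign cases are structurally parallel but must be written out separately because of the order-reversal phenomenon.
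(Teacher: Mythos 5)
Your overall strategy (induction on $t$, monotonicity of the affine update $\tau(\omega)=p_{01}+\omega(p_{11}-p_{01})$, and the fact that a freshly observed channel sits at an extreme of the belief range) is the right one, but the inductive step as you describe it contains a directional error at exactly the point you flag. For $p_{11}\ge p_{01}$, a channel last sensed bad $m$ slots ago has belief $\tau^{m-1}(p_{01})=p_{01}^{(m)}$, which \emph{increases} with $m$ toward $\omega_o$; so among the channels waiting in line, the one visited longest ago has the \emph{largest} belief, not the smallest. Consequently your claims that ``the ages $m_k$ are strictly increasing as we move along the circular order away from $\hat a(t)$'' and that ``older channels have beliefs closer to $p_{01}$'' are both backwards: the next channel in the circular order is precisely the oldest one, and ages \emph{decrease} along the order. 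Carried through literally, your age-based invariant would order the aged channels in the wrong direction, and it is this (correct) age-to-belief monotonicity that explains why the round-robin order works at all. You sensed the problem (``here I need to be careful about whether `oldest' means largest or smallest belief'') but did not resolve it, so the inductive step is not closed.

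The cleaner route --- the one the paper takes --- makes all of your extra bookkeeping unnecessary. Take the induction hypothesis to be nothing more than the ordering \eqref{eq:A11} itself, with the circular order started at $\hat a(t)$. Two elementary facts then close the step: (i) for every $t>1$ all beliefs lie between $p_{01}$ and $p_{11}$, and the channel observed in slot $t$ attains the upper endpoint of that range when the observation keeps it on top ($p_{11}$ after seeing $1$ in Case~1; $p_{01}$ after seeing $0$ in Case~2) and the lower endpoint otherwise, so it automatically slots in first or last; (ii) all unobserved channels evolve by the \emph{same} map $\tau$, which preserves their relative order when $p_{11}\ge p_{01}$ and reverses it when $p_{11}<p_{01}$ --- which is exactly why $\Kc(t+1)=\Kc(t)$ in Case~1 and the order flips each slot in Case~2. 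With this formulation there is no need to track which channels have been sensed, their ages, or whether their beliefs are $\tau$-iterates of $p_{01}$, $p_{11}$, or of the initial beliefs; in particular the ``transient phase'' you identify as the main obstacle is not an obstacle at all, since the hypothesis being propagated is only the ordering, never a parametric form of the belief values.
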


To prove Lemma~\ref{lemma:structure}, we introduce operator $\tau(\cdot)$ for the belief update of unobserved channels (see \eqref{eq:omega}).
\begin{equation}
\tau(\omega)\, \defeq\, \omega p_{11}+(1-\omega) p_{01}=p_{01}+\omega (p_{11}-p_{01}).
\label{eq:tau}
%\tau^{(k)}(\omega)&\defeq&\tau(\tau^{(k-1)}(\omega)).
\end{equation}
Note that $\tau(\omega)$ is an increasing function of $\omega$ for $p_{11}>p_{01}$
and a decreasing function of $\omega$ for $p_{11}<p_{01}$. Furthermore, we note that
the belief value $\omega_i(t)$ of channel $i$ in slot $t$ is bounded between
$p_{01}$ and $p_{11}$ for any $i$ and $t>1$, and an observed channel achieves either
the upper bound or the lower bound of the belief values (see \eqref{eq:omega}).

We now prove Lemma~\ref{lemma:structure} by induction.
For $t=1$, \eqref{eq:A11} holds by the definition of $\Kc(1)$.
Assume that \eqref{eq:A11} is true for slot~$t$, where
$\Kc(t)=(i_1,i_2,\cdots,i_N)$ and $\hat{a}(t)=i_1$.
We show that it is also true for slot~$t+1$.

Consider first $p_{11}\ge p_{01}$. We have $\Kc(t+1)=\Kc(t)=(i_1,i_2,\cdots,i_N)$.
When $S_{i_1}(t)=1$, we have $\hat{a}(t+1)=\hat{a}(t)=i_1$
from \eqref{eq:structure1}. Since $\omega_{i_1}(t+1)=p_{11}$ achieves the upper bound of the belief values
and the order of the belief values of the unobserved channels remains unchanged due to the monotonically
increasing property of $\tau(\omega)$, we arrive at \eqref{eq:A11} for $t+1$. When $S_{i_1}(t)=0$, we have $\hat{a}(t+1)=i_2$
from \eqref{eq:structure1}. We again have \eqref{eq:A11} by noticing that $\omega_{i_1}(t+1)=p_{01}$ achieves
the lower bound of the belief values and $\Kc(t+1)=(i_2,i_3,\cdots,i_N,i_1)$ when the starting point
is set to $\hat{a}(t+1)=i_2$.

For $p_{11}<p_{01}$, $\Kc(t+1)=-\Kc(t)=(i_1,i_N,i_{N-1},\cdots,i_2)$.
When $S_{i_1}(t)=0$, we have $\hat{a}(t+1)=\hat{a}(t)=i_1$
from \eqref{eq:structure2}. Since $\omega_{i_1}(t+1)=p_{01}$ achieves the upper bound of the belief values
and the order of the belief values of the unobserved channels is reversed due to the monotonically
decreasing property of $\tau(\omega)$, we have, from the induction assumption at~$t$,
\[
\omega_{i_1}(t+1)\ge\omega_{i_N}(t+1)\ge\omega_{i_{N-1}}(t+1)\ge\cdots\ge\omega_{i_2}(t+1),
\]
which agrees with \eqref{eq:A11} for~$t+1$ and $\Kc(t+1)=(i_1,i_N,i_{N-1},\cdots,i_2)$.
When $S_{i_1}(t)=1$, we have $\hat{a}(t+1)=i_N$
from \eqref{eq:structure2}. We again have \eqref{eq:A11} by noticing that $\omega_{i_1}(t+1)=p_{11}$ achieves
the lower bound of the belief values and $\Kc(t+1)=(i_N,i_{N-1},\cdots,i_2,i_1)$ when the starting point
is set to $\hat{a}(t+1)=i_N$. This concludes
the proof of Lemma~\ref{lemma:structure}, hence Theorem~\ref{thm:structure}.

\section*{Appendix B: Last Channel Visits and $j$-Step Transition Probabilities}

As commented in Sec.~\ref{sec:structure}, another way to see the channel switching structure of the myopic policy is
through the last visit to each channel once every channel
has been visited at least once. An alternative proof of this structure
is based on properties of the $j$-step transition probabilities $p_{01}^{(j)}$
and $p_{11}^{(j)}$~\cite{Gallager:95book}.
\begin{eqnarray}
p_{01}^{(j)}&=&\frac{p_{01}-p_{01}(p_{11}-p_{01})^{j}}{p_{01}+p_{10}},\label{eq:kstep1}\\
p_{11}^{(j)}&=&\frac{p_{01}+p_{10}(p_{11}-p_{01})^{j}}{p_{01}+p_{10}}.\label{eq:kstep2}
\end{eqnarray}
It is easy to see that for $p_{11}>p_{01}$,
$p_{01}^{(j)}$ monotonically increases to the stationary
distribution $\omega_o$ as $j$ increases. For
$p_{11}<p_{01}$, $p_{11}^{(j)}$
oscillates around and converges to $\omega_o$ with
$p_{11}^{(j)}>\omega_0$ for even $j$'s and $p_{11}^{(k)}<\omega_0$
for odd $j$'s (see Fig.~\ref{fig:kstep1} and \ref{fig:kstep2}). The channel switching structure thus follows
by noticing that channel switching occurs only after observing $0$ for $p_{11}\ge p_{01}$
and after observing $1$ for $p_{11}<p_{01}$.

\begin{figure}[h]
\centerline{
\begin{psfrags}
\psfrag{a}[r]{ $p_{01}$}
\psfrag{k}[c]{ $j$}
\psfrag{w}[r]{ $\omega_o$}
\psfrag{p}[l]{ $p_{01}^{(j)}$}
\psfrag{L}[c]{ $ $} %(a) $p_{11}>p_{01}$}
\scalefig{0.4}\epsfbox{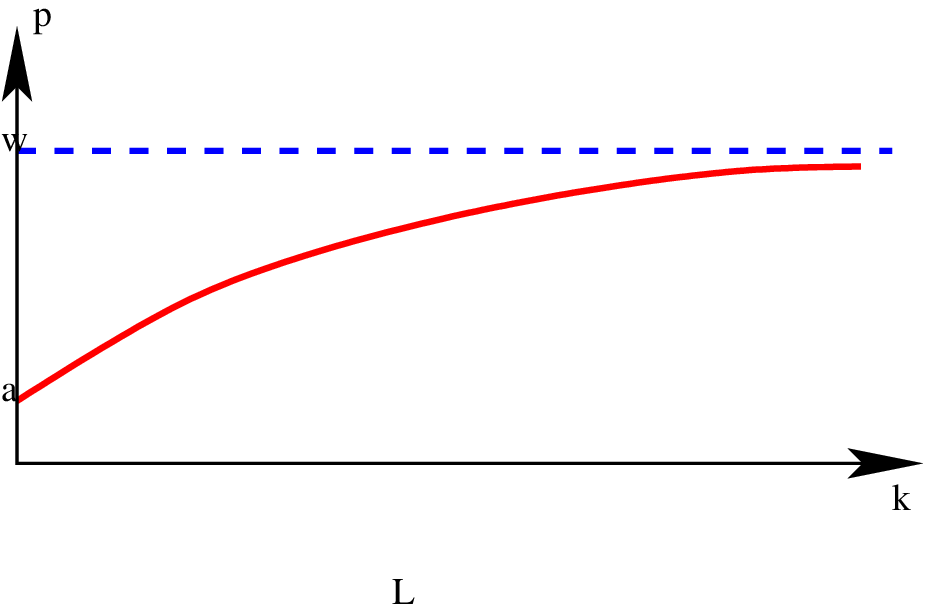}
\end{psfrags}}
\caption{The $j$-step transition probabilities of the Gilbert-Elliot channel when $p_{11}>p_{01}$.}
\label{fig:kstep1}
\end{figure}

\begin{figure}[h]
\centerline{
\begin{psfrags}
\psfrag{a}[r]{ $p_{11}$}
\psfrag{b}[r]{ $p_{01}$}
\psfrag{w}[r]{ $\omega_o$}
\psfrag{k}[c]{ $j$}
\psfrag{c}[c]{ $1$}
\psfrag{d}[c]{ $2$}
\psfrag{e}[c]{ $3$}
\psfrag{f}[c]{ $4$}
\psfrag{p}[l]{ $p_{11}^{(j)}$}
\psfrag{L}[c]{ $ $} %(b) $p_{11}<p_{01}$}
\scalefig{0.36}\epsfbox{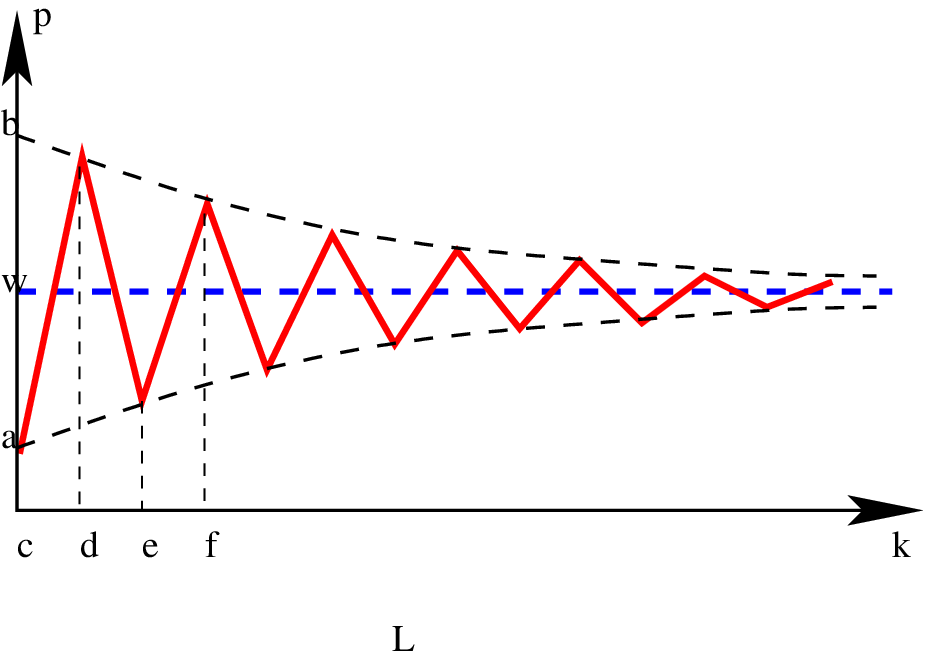}
\end{psfrags}}
\caption{The $j$-step transition probabilities of the Gilbert-Elliot channel when $p_{11}<p_{01}$.}
\label{fig:kstep2}
\end{figure}

%%% This is a long equation for Appendix C.
\begin{figure*}[t!]
\begin{eqnarray}
\hat{V}_{t}(1|[1,0])&=&p_{01}+p_{10}p_{00}V_{t+1}(2|[0,0])+p_{10}p_{01}V_{t+1}(2|[0,1])+p_{11}p_{00}V_{t+1}(2|[1,0])+p_{11}p_{01}V_{t+1}(2|[1,1]),
\label{eq:AppC1}\\
\hat{V}_{t}(1|[0,1])&=&p_{01}+p_{00}p_{10}V_{t+1}(1|[0,0])+p_{00}p_{11}V_{t+1}(1|[0,1])+p_{01}p_{10}V_{t+1}(1|[1,0])+p_{01}p_{11}V_{t+1}(1|[1,1]).
\label{eq:AppC2}
\end{eqnarray}
\hrulefill
\end{figure*}
%% End of the long equation.

\section*{Appendix C: Proof of Theorem~\ref{thm:optimal}}

Recall that $\hat{V}_t(\Omega)$ denotes the total expected reward obtained under the myopic policy
starting from slot $t$. Let $\hat{V}_t(\Omega;a)$ denote the total expected reward obtained by action $a$
in slot $t$ followed by the myopic policy in future slots.
We first establish the following lemma which applies to a general POMDP/MDP.

\begin{lemma}
For a POMDP over a finite horizon $T$, the myopic policy is optimal if for $t=1,\cdots,T$,
\begin{equation}
\hat{V}_t(\Omega)\ge \hat{V}_t(\Omega;a),~~~\forall a,\Omega.
\label{eq:C0}
\end{equation}
\label{lemma:pomdp}
\end{lemma}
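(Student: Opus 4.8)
The plan is to prove the contrapositive-style statement directly by backward induction on the slot index, exploiting the stationarity of the myopic policy. The key observation is that condition \eqref{eq:C0} says the myopic action in slot $t$ is at least as good as any single deviation in slot $t$ followed by myopic play; I want to upgrade this ``one-step'' dominance to full optimality, i.e.\ $\hat V_t(\Omega) = V_t(\Omega)$ for all $t$ and $\Omega$. First I would set up the induction with base case $t = T+1$ (empty horizon, $\hat V_{T+1} \equiv V_{T+1} \equiv 0$), or equivalently $t=T$, where the myopic action by definition maximizes the immediate reward $\omega_a(T)$ and hence $\hat V_T = V_T$.

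For the inductive step, assume $\hat V_{t+1}(\Omega) = V_{t+1}(\Omega)$ for all $\Omega$. Then for any action $a$ and belief $\Omega$, the quantity $\hat V_t(\Omega;a)$ — take action $a$ now, play myopically afterward — equals $\omega_a(t)\big[1 + \hat V_{t+1}(\Tc(\Omega|a,1))\big] + (1-\omega_a(t))\hat V_{t+1}(\Tc(\Omega|a,0))$, which by the induction hypothesis equals $\omega_a(t)\big[1 + V_{t+1}(\Tc(\Omega|a,1))\big] + (1-\omega_a(t)) V_{t+1}(\Tc(\Omega|a,0))$, the exact expression being maximized over $a$ in the dynamic program \eqref{DP-finite-t}. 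Therefore $\max_a \hat V_t(\Omega;a) = V_t(\Omega)$. On the other hand, $\hat V_t(\Omega) = \hat V_t(\Omega;\hat a(t))$ by definition of the myopic value function, and hypothesis \eqref{eq:C0} gives $\hat V_t(\Omega) = \hat V_t(\Omega;\hat a(t)) \ge \hat V_t(\Omega;a)$ for every $a$, so $\hat V_t(\Omega) \ge \max_a \hat V_t(\Omega;a) = V_t(\Omega)$. Since trivially $\hat V_t(\Omega) \le V_t(\Omega)$ (the optimal value dominates any policy's value), we conclude $\hat V_t(\Omega) = V_t(\Omega)$, completing the induction.

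The only subtlety — and the step I would be most careful about — is making sure the induction hypothesis is applied to the correct belief arguments: in the expression for $\hat V_t(\Omega;a)$ the continuation beliefs are $\Tc(\Omega|a,1)$ and $\Tc(\Omega|a,0)$ for the \emph{deviating} action $a$, not for the myopic action, so one needs the hypothesis $\hat V_{t+1} = V_{t+1}$ to hold at \emph{all} beliefs, which is exactly how the induction is stated. No real obstacle remains beyond this bookkeeping; the lemma is essentially the standard ``one-step-lookahead is optimal when it is unimprovable'' fact for finite-horizon dynamic programs, specialized to the POMDP belief-MDP, and the proof is short once the induction is organized as above.
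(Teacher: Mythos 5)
Your proof is correct and takes essentially the same route as the paper: a backward induction in which the hypothesis \eqref{eq:C0} upgrades one-step unimprovability to $\hat{V}_t(\Omega)=\max_a\hat{V}_t(\Omega;a)$, and the induction hypothesis $\hat{V}_{t+1}=V_{t+1}$ identifies this maximum with the dynamic-programming recursion \eqref{DP-finite-t}, giving $\hat{V}_t=V_t$. Your version merely spells out the bookkeeping (the trivial bound $\hat{V}_t\le V_t$ and the continuation beliefs $\Tc(\Omega|a,\cdot)$) that the paper's two-line argument leaves implicit.
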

\vspace{-1em}
Lemma~\ref{lemma:pomdp} can be proved by backward induction. Specifically, the initial condition $\hat{V}_T(\Omega)=V_T(\Omega)$
is straightforward. Assume that $\hat{V}_{t+1}(\Omega)=V_{t+1}(\Omega)$. We then have, from \eqref{eq:C0},
\begin{eqnarray}
\hat{V}_t(\Omega)&=&\max_{a=1}\{R_a(\Omega)+\sum_{\Omega'} \Pr[\Omega'|\Omega,a]\hat{V}_{t+1}(\Omega)\}\nn\\
&=&\max_{a=1}\{R_a(\Omega)+\sum_{\Omega'} \Pr[\Omega'|\Omega,a] V_{t+1}(\Omega)\}=V_t(\Omega),\nn
\end{eqnarray}
\ie the myopic policy is optimal.

We now prove Theorem~\ref{thm:optimal} based on Corollary~1. Considering all channel state realizations in
slot $t$, we have
\begin{eqnarray}
\hspace{-0.2in}& \hat{V}_t(\Omega;a)=\sum_{\sbf}\Pr[\Sbf(t)=\sbf|\Omega]\hat{V}_t(\Omega;a|\Sbf(t)=\sbf)\nn\\
\hspace{-0.2in}&=\omega_a+\sum_{\sbf}\Pr[\Sbf(t)=\sbf|\Omega]\hat{V}_{t+1}(\Tc(\Omega|a,s_a)|\Sbf(t)=\sbf),
\label{eq:Vta}
\end{eqnarray}
where $\hat{V}_{t+1}(\Tc(\Omega|a,s_a)|\Sbf(t)=\sbf)$ is the {\em conditional} reward obtained
starting from slot $t+1$ given that the system state in slot $t$ is $\sbf$. From Corollary~1, we have
\begin{equation}
\hat{V}_t(\Tc(\Omega|a,s_a)|\Sbf(t-1)=\sbf)=\hat{V}_t(\Tc(\Omega'|a,s_a)|\Sbf(t-1)=\sbf),
\label{eq:prop3a}
\end{equation}
\ie the conditional total expected reward of the myopic policy starting from slot $t$ is determined by the action $a$ in slot $t-1$
and independent of the belief vector $\Omega$ in slot $t-1$ (note that $a(t-1)$ and $\Sbf(t-1)$ determines $\vec{\Sbf}(t)$, which
determines the reward process). Adopting the simplified notation of $\hat{V}_t(a(t-1)|\Sbf(t-1)=\sbf)$, we
further have, from the statistically identical assumption of channels,
\begin{eqnarray}
& & \hat{V}_t(a(t-1)=1|\Sbf(t-1)=[s_1,s_2])\nn\\
&=&\hat{V}_t(a(t-1)=2|\Sbf(t-1)=[s_2,s_1]).
\label{eq:prop3b}
\end{eqnarray}

Next we show that
\begin{eqnarray}
& & \hat{V}_{t}(a(t-1)=1|\Sbf(t-1)=[1,0])\nn\\
&=&\hat{V}_{t}(a(t-1)=1|\Sbf(t-1)=[0,1])).
\label{eq:A1}
\end{eqnarray}
Assume that $p_{01}>p_{11}$.
Following the structure of the myopic policy, we know that the myopic action in slot $t$ is $\hat{a}(t)=2$ for the left hand side
of \eqref{eq:A1} and
$\hat{a}(t)=1$ for the right, which leads to~\eqref{eq:AppC1} and \eqref{eq:AppC2}.
We then have \eqref{eq:A1} based on \eqref{eq:prop3b}. The case of $p_{01}<p_{11}$ can be similarly proved.

Consider $\Omega=[\omega_1,\omega_2]$ with $\omega_1\ge\omega_2$. The myopic action is thus $a=1$. We now establish \eqref{eq:C0}.
From \eqref{eq:Vta} and \eqref{eq:prop3b}, we have
\begin{eqnarray}
\hat{V}_t(\Omega; a=1)&=&\omega_1+\sum_{i,j\in\{0,1\}} \Pr[\Sbf(t)=[i,j]] \hat{V}_{t+1}(1|[i,j]),\nn\\
\hat{V}_t(\Omega; a=2)&=&\omega_2+\sum_{i,j\in\{0,1\}} \Pr[\Sbf(t)=[i,j]] \hat{V}_{t+1}(1|[j,i]).\nn
\end{eqnarray}
It thus follows from \eqref{eq:A1} that
\begin{eqnarray}
& & \hat{V}_t(\Omega; a=1)-\hat{V}_t(\Omega;a=2)\nn\\
&=&(\omega_1-\omega_2)(1+\hat{V}_{t+1}(1|[1,0])-\hat{V}_{t+1}(1|[0,1]))\nn\\
&=&\omega_1-\omega_2\nn\\
&\ge& 0.\nn
\end{eqnarray}
This concludes the proof.

\section*{Appendix D: Proof of Theorem~\ref{thm:N2}}

Consider first $p_{11}\ge p_{01}$. Let $\Rbf=\{r_{i,j}\}$ denote the transition matrix of
$\{L_k\}_{k=1}^{\infty}$, where $r_{i,j}$ is given in~\eqref{eq:rij1}.
Let $\Rbf(:,k)$ denote the $k$-th column of $\Rbf$. We have
\begin{eqnarray}
\textbf{1}-\Rbf(:,1)=\frac{\Rbf(:,2)}{p_{10}},~~~\Rbf(:,k)=\Rbf(:,2)(p_{11})^{k-2},
\label{eq:C1}
\end{eqnarray}
where \textbf{1} is the unit column vector $[1,1,...]^t$. By
the definition of stationary distribution, we have, for $k=1,2,\cdots,$
\begin{equation}
[\lambda_1,\lambda_2,\cdots] \Rbf(:,k)=\lambda_k,
\label{eq:C1b}
\end{equation}
which, combined with \eqref{eq:C1}, leads to
\begin{equation}
\lambda_1=1-\frac{\lambda_2}{(1-p_{11})},~~~\lambda_k=\lambda_2p_{11}^{k-2}.
\label{eq:C2}
\end{equation}
Substituting \eqref{eq:C2} into \eqref{eq:C1b} for $k=2$
and solving for $\lambda_2$, we have $\lambda_2=\bar{\omega}p_{10}$, where $\bar{\omega}$ is given in \eqref{eq:baromega1}.
From \eqref{eq:C2}, we then have
the stationary distribution as
\begin{equation}
\lambda_k= \left\{\begin{array}{ll}
1-\bar{\omega}, & k = 1\\
\bar{\omega}p_{11}^{k-2}p_{10}, & k>1\\
\end{array}
\right.,
\end{equation}
which leads to \eqref{eq:N2} based on \eqref{eq:U} and $\bar{L}=\sum_{k=1}^\infty k\lambda_k$.
The proof for $p_{11}< p_{01}$ is similar based on the transition probabilities given in~\eqref{eq:rij2}.

Based on Corollary~\ref{thm:MRP}, Theorem~\ref{thm:N2} can also be proved by calculating the stationary distribution of $\{\vec{\Sbf}(t)\}$.

\section*{Appendix E: Proof of Theorem~\ref{thm:N3}}

\noindent{\it Case 1: $p_{11}\ge p_{01}$}~~~ Let $\omega_k$ denote the belief value of the chosen
channel in the first slot of the $k$-th TP. The length $L_k(\omega_k)$ of this
TP has the following distribution.
\begin{equation}
\Pr[L_k(\omega_k)=l]= \left\{\begin{array}{ll} 1-\omega_k, & l = 1 \\
\omega_k p_{11}^{l-2}p_{10}, & l>1\\
\end{array}
\right..
\end{equation}
It is easy to see that
if $\omega'\ge \omega$, then
$L_k(\omega')$ stochastically dominates $L_k(\omega)$.

From the round-robin structure of the myopic policy, $\omega_k=p_{01}^{(J_k)}$,
where $J_k=\sum_{i=1}^{N-1} L_{k-i}+1$.
Based on the monotonic increasing property of the $j$-step
transition probability $p_{01}^{(j)}$
(see \eqref{eq:kstep1} and Fig.~\ref{fig:kstep1}), we have $\omega_k\le\omega_o$, where $\omega_o$ is the stationary distribution
of the Gilbert-Elliot channel given in \eqref{eq:wo}. $L_k(\omega_o)$ thus stochastically dominates $L_k(\omega_k)$,
and the expectation of the former, $\overline{L_k(\omega_o)}=1+\frac{\omega_o}{1-p_{11}}$, leads to the upper bound
of $U$ given in~\eqref{eq:N3a}.

Next, we prove the lower bound of $U$ by constructing a hypothetical system where the initial belief value
of the chosen channel in a TP is a lower bound of that in the real system. The average TP length
in this hypothetical system is thus smaller than that in the real system, leading to a lower bound on $U$ based on~\eqref{eq:U}.
Specifically, since $\omega_k=p_{01}^{(J_k)}$ and $J_k=\sum_{i=1}^{N-1} L_{k-i}+1\ge N+L_{k-1}-1$, we have
$\omega_k\le p_{01}^{(N+L_{k-1}-1)}$. We thus construct a hypothetical system given by a first-order Markov chain
$\{L_k'\}_{k=1}^\infty$ with the following transition probability $r_{i,j}$.
\begin{equation}
r_{i,j}=\left\{\begin{array}{ll}
1-p^{(N+i-1)}_{01}, & i\geq 1,~j=1\\
p^{(N+i-1)}_{01}p_{11}^{j-2}p_{10}, & i\geq 1,~j\geq 2
\end{array}
\right..
\end{equation}
It can be shown that the stationary distribution of $\{L_k\}_{k=1}^\infty$
stochastically dominates that of the hypothetical system $\{L_k'\}_{k=1}^\infty$ (see~\cite{Liu&Zhao:07TR} for details).
The latter can be obtained with the same techniques used in Appendix~D. The average
length of $L_k'$ can thus be calculated, leading to the lower bound given in~\eqref{eq:N3a}.

\noindent{\it Case 2: $p_{11}< p_{01}$}~~~ In this case, the larger the initial belief of the chosen channel in a given
TP, the smaller the average length of the TP. On the other hand, \eqref{eq:U} shows that
$U$ is inversely proportional to the average TP length. Thus, similar to the case of $p_{11}\ge p_{01}$,
we will construct hypothetical systems where the initial belief of the chosen channel in a TP is an upper
bound or a lower bound of that in the real system. The former leads to an upper bound on $U$, the latter, a lower bound on $U$.

Consider first the upper bound. From the structure of the myopic policy, it is clear that when $L_{k-1}$ is odd, in the $k$-th
TP, the user will switch to the channel visited in the $(k-2)$-th TP. As a consequence, the initial belief
$\omega_k$ of the $k$-th TP is given by $\omega_k=p_{11}^{(L_{k-1}+1)}$. When $L_{k-1}$ is even, we can show that
$\omega_k\le p_{11}^{(L_{k-1}+4)}$. This is because that for $N\ge 3$ and $L_{k-1}$ even, the user cannot switch to a channel
visited $L_{k-1}+2$ slots ago, and $p_{11}^{(j)}$ decreases with $j$ for even $j$'s and $p_{11}^{(j)}>p_{11}^{(i)}$ for any
even $j$ and odd $i$ (see \eqref{eq:kstep2} and Fig.~\ref{fig:kstep2}). We thus construct a hypothetical system
given by the first-order Markov chain $\{L_k'\}_{k=1}^\infty$ with the following transition probabilities.
\begin{equation}
r_{i,j}=\left\{\begin{array}{ll}
p^{(i+1)}_{11}, & \mbox{ if } i \mbox{ is odd, } j=1\\
p^{(i+1)}_{10}p_{00}^{j-2}p_{01}, & \mbox{ if } i \mbox{ is odd, } j\geq 2\\
p^{(i+4)}_{11}, & \mbox{ if } i \mbox{ is even, } j=1\\
p^{(i+4)}_{10}p_{00}^{j-2}p_{01}, & \mbox{ if } i \mbox{ is even, } j\geq 2\\
\end{array}
\right..\nn
\end{equation}
It can be shown that the stationary distribution of $\{L'_k\}_{k=1}^\infty$
is stochastically dominated by that of $\{L_k\}_{k=1}^\infty$. The former leads to the upper bound of $U$ given in~\eqref{eq:N3b}.

We now consider the lower bound. Similarly, $\omega_k=p_{11}^{(L_{k-1}+1)}$ when $L_{k-1}$ is odd. When $L_{k-1}$ is
even, to find a lower bound on $\omega_k$, we need to find the smallest odd $j$ such that the last visit to the channel
chosen in the $k$-th TP is $j$ slots ago. From the structure of the myopic policy, the smallest feasible odd $j$
is $L_{k-1}+2N-3$, which corresponds to the scenario where all $N$ channels are visited in turn from the $(k-N+1)$-th TP to the
$k$-th TP with $L_{k-N+1}=L_{k-N+2}=\cdots=L_{k-2}=2$. We thus have $\omega_k\ge p_{11}^{(L_{k-1}+2N-3)}$.
We then construct a hypothetical system
given by the first-order Markov chain $\{L_k'\}_{k=1}^\infty$ with the following transition probabilities.
\begin{equation}
r_{i,j}=\left\{\begin{array}{ll}
p^{(i+1)}_{11}, & \mbox{ if } i \mbox{ is odd, } j=1\\
p^{(i+1)}_{10}p_{00}^{j-2}p_{01}, & \mbox{ if } i \mbox{ is odd, } j\geq 2\\
p^{(i+2N-3)}_{11}, & \mbox{ if } i \mbox{ is even, } j=1\\
p^{(i+2N-3)}_{10}p_{00}^{j-2}p_{01}, & \mbox{ if } i \mbox{ is even, } j\geq 2\\
\end{array}
\right..\nn
\end{equation}
The stationary distribution of this hypothetical system leads to the lower bound of $U$ given in~\eqref{eq:N3b}.

\section*{Appendix F: Proof of Corollary~\ref{thm:rate}}

Let $x=|p_{11}-p_{01}|$.
For $p_{11}>p_{01}$,
after some simplifications, the lower bound has the form
$a+b/(x^N+c)$, where $a,b,c~(c\neq0)$ are constants. The upper bound
is $a+b/c$. We have $\frac{|a+b/(x^N+c)-a-b/c|}{x^N}\rightarrow
b/c^2$ as $N\rightarrow\infty$. Thus the lower bound converges to the
upper bound with geometric rate $x$.

For $p_{11}<p_{01}$,
the lower bound has the form $d+e/(x^{2N-1}+f)$, where
$d,e,f~(f\neq0)$ are constants. It converges to $d+e/f$ as
$N\rightarrow\infty$. We have
$\frac{|d+e/(x^{2N-1}+f)-d-e/f|}{x^{2N}}\rightarrow e/(xf^2)$ as
$N\rightarrow\infty$. Thus the lower bound converges with geometric
rate $x^2$.

\section*{Acknowledgement}

The authors would like to thank the associate editor and anonymous reviewers for
their invaluable comments and suggestions.

%%%%%%%%%% References %%%%%%%%%%%%%%%%%%%%%%%%%%%%%%%%%%%%%%%%%%%%%%%%%%
\bibliographystyle{ieeetr}
{

}

\end{document}